\newcommand{\GKB}[1]{{#1}}
\newcommand{\YJ}[1]{{#1}}
\def \be {\begin{equation}}
\def \ee {\end{equation}}
\newcommand{\ket}[1]{|#1\rangle}
\newcommand{\bra}[1]{\langle#1|}
\def\>{\rangle}
\def\<{\langle}
\newtheorem{theorem}{Theorem}
\newtheorem{lemma}[theorem]{Lemma}
 \newcommand{\yo}[1]{{#1}}
\begin{document}

\title{Measurement-free code-switching for low overhead  quantum computation using permutation invariant codes}

\author{Yingkai Ouyang}
\email{y.ouyang@sheffield.ac.uk}
\affiliation{School of Mathematical and Physical Sciences, University of Sheffield, Sheffield, S3 7RH, United Kingdom}

\author{Yumang Jing}
\affiliation{Center for Engineered Quantum Systems, Dept. of Physics \& Astronomy, Macquarie University, 2109 NSW, Australia}

\author{Gavin K. Brennen}
\affiliation{Center for Engineered Quantum Systems, Dept. of Physics \& Astronomy, Macquarie University, 2109 NSW, Australia}

\begin{abstract}
Transversal gates on quantum error correction codes have been a promising approach for fault-tolerant quantum computing, but are limited by the Eastin-Knill no-go theorem. Existing solutions like gate teleportation and magic state distillation are resource-intensive. We present a measurement-free code-switching protocol for  universal quantum computation, switching between a stabiliser code for transversal Cliffords and a permutation-invariant (PI) code for transversal non-Cliffords
that are logical $Z$ rotations for any rational multiple of $\pi$. The novel non-Clifford gates enabled by this code-switching protocol provide for a lower gate count implementation of a universal gate set relative to the Clifford$+T$ gate set. To achieve this, we present a protocol for performing controlled-NOTs between the codes using near-term quantum control operations that employ a catalytic bosonic mode. \GKB{We also present a new class of PI codes with tunable code distance, supporting transversal non-Clifford gates, and demonstrate their reduced gate count overhead relative to a comparable stabilizer code to stabilizer code switching scheme.  }
\end{abstract}

\maketitle

\section{Introduction}

Transversal gates \cite{EastinKnill-2009-PRL} on quantum error correction (QEC) codes are the cornerstone of many promising approaches to realising a fault-tolerant quantum computer \cite{campbell2017roads}. 
Transversal gates have an elegant structure which makes them fault-tolerant since they act on groups of physical qubits in parallel, thus avoiding a catastrophic spread of single errors within a code block.
While using transversal gates for fault-tolerant quantum computation is attractive, 
for any given QEC code,
the Eastin-Knill no-go theorem \cite{EastinKnill-2009-PRL} forbids
a universal set of transversal gates.
 
In light of the Eastin-Knill no-go theorem, the path towards realising fault-tolerant logical non-Clifford gates 
cannot rely on a single QEC code that implements universal quantum gates transversally. 
Instead, we need alternative methods.
One such approach is gate teleportation \cite{GCh99,ZLC00}, which consumes magic states. 
To ensure fault tolerance, the magic states must have sufficiently high quality, achieved by distilling cleaner magic states from a larger quantity of noisy ones \cite{bravyi2005universal,bravyi2012magic,krishna2019towards,campbell2017unified,campbell2017unifying,eastin2013distilling,hastings2018distillation}. 
However, the magic state distillation procedure can be costly both in terms of computation time and the number of qubits used \cite{campbell2017-magic-state-factory,cost-of-universality-PRXQuantum.2.020341,Litinski2019magicstate}.

Code-switching \cite{poulin-code-switch-PhysRevLett.113.080501,Bombín_2016,kubica-PhysRevA.91.032330,poulsen2017fault} presents a conceptually simple alternative to gate-teleportation using distilled magic states for enabling universal fault-tolerant quantum computations. 
In code-switching, different quantum error correction codes are employed for logical Clifford and non-Clifford gates, which allows all these logical gates to be performed transversally.
While code-switching offers a straightforward approach to achieve universal quantum gates fault-tolerantly, it can incur significant measurement overhead \cite{cost-of-universality-PRXQuantum.2.020341}.

For instance, in code-switching between two stabiliser codes \cite{butt2024fault,pogorelov2024experimental,huang2023graphical,daguerre2024code}, 
one performs multiple fault-tolerant stabiliser measurements to transform one stabiliser code into another, and the number of these measurements increases with the number of errors that the stabiliser codes can correct. 
Ref.~\cite{heußen2024efficientfaulttolerantcodeswitching} 
uses two single transversal measurements instead of multiple fault-tolerant stabiliser measurements. 
Another solution, amenable to surface codes, achieves a fault tolerant $CCZ$ gate using local transversal gates and code deformations over a time that scales with the size of the qubit array \cite{Brown_2019}.

A recent approach to measurement-free code-switching uses multi-qubit non-Clifford gates that are not transversal \cite{heussen2024measurement}. While elegant, compiling such non-transversal non-Clifford gates via near-term high-fidelity quantum control operations can be a challenge.
Hence, there remains the question regarding the possibility of measurement-free code-switching using only transversal gates.

In our paper, we show how to perform high fidelity quantum computing with a universal gate set using 
a measurement-free code-switching protocol.
In our protocol, we switch between QEC codes that encode a single logical qubit: a stabiliser code performs transversal logical Clifford gates and a permutation-invariant (PI) code that performs 
transversal logical non-Clifford gates.
The types of such transversal non-Clifford gates include logical $Z$ rotations for any rational multiple of $\pi$. 
The novel transversal non-Clifford gates that we introduce allow the implementation of gates in the binary icosahedral group and also an approximate $\tau_{60}$ gate \cite{kubischta2023family}   to allow the implementation of the most efficient known single-qubit universal gate set \cite{PARZANCHEVSKI2018869}.

At the core of our proposed code-switching algorithm 
is 
(1) a measurement-free state-teleportation protocol 
and that uses only logical CNOT gates, or equivalent gate set, between a stabiliser code and a PI code and 
(2)
our compilation of 
controlled-NOT gates that act between a stabiliser code and a PI code in terms of near-term quantum control.
The requirements of our protocol are modest.
First, we need strong, linear, coupling between the spins and the mode. In the case of coupling to a quantized cavity mode such as an optical or microwave cavity mode, this translates into the requirement for high co-operativity.
Second, we require that the stabiliser code admits transversal implementations of logical Clifford gates. Third, the stabiliser code should be an {\em even-odd quantum code} meaning its logical zero (one) codeword can be written as a superposition of even(odd)-weight computational basis states. That is, we can write the logical zero as 
$|0_L\> = \sum_{ {\bf x} \in \{0,1\}^n , |{\bf x}|{\ {\rm even}}} a_{\bf x} |{\bf x}\>$ and the logical one as
$|1_L\> = \sum_{ {\bf x} \in \{0,1\}^n , |{\bf x}|{\ {\rm odd}}} b_{\bf x} |{\bf x}\>$
for some complex coefficients $a_{\bf x}, b_{\bf x}$.
As explained below, stabiliser codes with the second two properties include 2D-color codes with even stabiliser weights and an odd number of qubits \cite{Bombin_2015,campbell2017roads} such as the Steane code \cite{Ste96},
and also Bacon-Shor codes \cite{shorcode-PhysRevA.52.R2493,bacon-PhysRevA.73.012340,baconshor} that are concatenations of odd length repetition codes. Most of the PI codes we consider here will also satisfy the even-odd  code property, though we show that it is possible to switch from an even-odd stabilizer code to a non even-odd PI code using a non-linear, dispersive coupling to a mode. 

\section{Code switching via swapping}

\subsection{Transversal gates on PI codes}
\label{introPIcodes}
PI codes \cite{Rus00,PoR04,ouyang2014permutation,ouyang2015permutation,OUYANG201743,movassagh2020constructing} allow QEC to be done on quantum states that are invariant under any permutation of the underlying particles. 
PI codes are attractive for various reasons. 
First their controllability by global fields could allow for their scalable physical implementations \cite{johnsson2020geometric} in near-term devices such as  trapped ions or ultracold atoms where addressability is challenging due to  cross-talk. 
Second, PI codes can correct deletions, i.e. erasures at unknown locations \cite{ouyang2021permutation,ouyang2023quantum}, along with insertion errors \cite{hagiwara2021four,shibayama2021equivalence,shibayama2022equivalence}, which conventional QEC codes cannot correct.

The potential to implement PI codes for applications such as quantum sensing \cite{ouyang2019robust,ouyang2023quantum} and quantum storage \cite{ouyang2019quantum} has been explored. 
However, it was only recently recognised that PI codes can also enable the transversal implementation of logical non-Clifford gates \cite{kubischta2023not,aydin2023family}. 
For simplicity, we first focus on two PI codes both of which encode a single logical qubit and have distance three, one on 7 qubits \cite{kubischta2023not,aydin2023family}, and another on 11 qubits \cite{kubischta2023not}. Both of these codes lie within the family of PI codes recently introduced by Aydin {\it et al.} \cite{aydin2023family}. For a definition see  Appendix~\ref{app:AAB}.

Permutation invariant quantum states on $N$ qubits are superpositions of the Dicke states 
$| D^N_w\>  ={\binom N w}^{-1/2}
\sum_{\substack{ x_1, \dots, x_N \in \{0,1\} \\ x_1+ \dots + x_N = w}}
|x_1, \dots, x_N\>,$
of weights $w = 0,\dots, N$. The Dicke states span the $N+1$ dimensional maximum spin ($J=N/2$) angular momentum space of the qubits.  
The 7-qubit PI code that we consider has support on four Dicke states \cite[Example 4]{aydin2023family} 
and is a multi-spin analog of the Gross code on a single spin~\cite[Eq. (14)]{gross-code}.
This 7-qubit code has logical codewords
\begin{align}
|0_{\rm pi7}\> &\coloneqq 
\sqrt{3/10}    |D_0^7\>
+
\sqrt{7/10}    |D_5^7\>\notag\\
|1_{\rm pi7}\> &\coloneqq 
\sqrt{7/10}    |D_2^7\>
-
\sqrt{3/10}    |D_7^7\>,\label{pi7}
\end{align}
and has distance three, but note that since the logical code words are superpositions of Dicke states with different parity weights, this is not an even-odd code.
 An 11-qubit PI code with distance three that was introduced in Ref~\cite{kubischta2023not} has logical codewords given by
\begin{align}
    |0_{\rm pi11}\>&= 
    \frac{1}{4} (
    \sqrt 5    |D^{11}_0\>
     + \sqrt {11} |D^{11}_8\>)
    \\
    |1_{\rm pi11}\>&=
    \frac{1}{4} (
    \sqrt {11}    |D^{11}_3\>
     + \sqrt 5 |D^{11}_{11}\>).
\end{align}
This Kubischta-Teixeira PI code supports a transversal implementation of the logical $T=Z(\pi/4)$ gate,
where 
$    Z(\theta) \coloneqq
    |0\>\<0| + |1\>\<1| e^{i\theta}$
denotes a qubit rotation operator about the $Z$-axis.

We denote a logical $ Z(\theta)$ gate as
$\bar Z_\Gamma(\theta)$ to denote a code logical operator on a set of physical qubits defined by a subsystem $\Gamma$ that has the following action on the codespace spanned by  logical codewords $|0_\Gamma\>$ and $|1_\Gamma\>$:
\begin{align}
    \bar Z_\Gamma(\theta) 
    (c_0 |0_\Gamma\> + c_1 |1_\Gamma\>)
    = 
     c_0 |0_\Gamma\> + c_1 e^{i\theta }|1_\Gamma\> .
\end{align}
Here, we mostly use $A$ to denote a subsystem that uses a stabiliser code, and $B$ to denote a subsystem that uses a PI code, but in general, we only require the codes in subsystems $A$ and $B$ to be even-odd codes.
The 7-qubit PI code admits a logical non-Clifford gate 
$\bar Z_B(4\pi/5) = Z(2\pi/5)^{\otimes 7} $ that is transversal.
 
Now let us denote the transversal operators on subsystem $\Gamma$ as
$\bar X_\Gamma \coloneqq X^{\otimes |\Gamma|}$,
$\bar Y_\Gamma \coloneqq Y^{\otimes |\Gamma|}$,
and
$\bar Z_\Gamma \coloneqq Z^{\otimes |\Gamma|}$, where $X,Y,Z$ denote qubit Pauli matrices
and $|\Gamma|$ is the number of qubits in $\Gamma$.  
Since
$\bar Z_B \bar X_B |0_{\rm pi7}\>
=|1_{\rm pi7}\>$
and
$\bar Z_B \bar X_B |1_{\rm pi7}\>
=- |0_{\rm pi7}\>$, the logical $-iY$ operator on the 7-qubit PI code is $X_{\rm pi7} \coloneqq \bar Z_B \bar X_B 
= -i \bar Y_B$.

The Pollatsek-Ruskai code is another 7-qubit PI code of distance three which is distinct from the 7-qubit code in \eqref{pi7}, since it is an even-odd code \cite{PoR04}.
The Pollatsek-Ruskai 7-qubit PI code admits transversal gates in the group $\mathrm{2I}$, the binary icosahedral group with order $|\mathrm{2I}|=120$ \cite{KT-2I}. 
For this code, the transversal $X$ and transversal $Z$ operators are also the logical $X$ and $Z$ operators, and the transversal $F$ gate is also the logical $F^*$ gate, where $F=H Z(-\pi/2)$ and $H$ denotes the Hadamard operator. 
This code can also implement the logical $\Phi^\star$ code with transversal $\Phi$ gates, where 
$\Phi$ and $\Phi^\star$ are non-Clifford gates given by 
\begin{align}
\Phi=\frac{1}{2}
\begin{pmatrix}
    \zeta    &     1\\
    -1 & \zeta^* \\
\end{pmatrix},\quad
\Phi^\star=\frac{1}{2}
\begin{pmatrix}
    \zeta^\star     &     1\\
    -1 &(\zeta^\star)^*
\end{pmatrix},
\end{align}
where 
$\zeta =  \varphi + i \varphi^{-1}$,
$\zeta^\star =  \varphi^\star + i (\varphi^\star)^{-1}$,
$\varphi = (1+\sqrt 5)/2$
and $\varphi^\star = (1-\sqrt 5)/2$.
Together, the logical $X$,
logical $Z$, logical $F$, and logical $\Phi^\star$ gates 
generate the algebra of a binary icosahedral group $\mathrm{2I}$.
Combined with the $\tau_{60}$ gate given by
\begin{align}
    \tau_{60} \coloneqq 
\frac{1}{\sqrt{5\varphi +7}}\begin{pmatrix}
    2+\varphi & 1 - i \\
    1+i & -2-\varphi
\end{pmatrix} ,
\label{supergolden}
\end{align}
this allows us to obtain the most efficient known universal single-qubit gate set \cite{KT-2I}, providing a factor of $\sim 5.9$ complexity saving as compared to the usual Clifford + $T$ gate set \cite{boykin}.

Turning our attention to the 
11-qubit PI code, we see that 
$\bar X_B$ and $\bar Z_B$ implement the logical $X$ and logical $Z$ operators respectively. Application of the transversal $T$ gate gives us a non-Clifford logical gate $\bar Z_B(3\pi/4) = T^{\otimes 11}$.
Thus, we can obtain a logical $T$ gate by applying 
$ Z^{\otimes 11} (T^\dagger)^{\otimes 11}$.
Hence we can obtain a 
logical $T$ gate by applying 
$Z(3\pi/4)^{\otimes 11}$.

There are also PI codes that can implement logical $Z$ rotations using transversal gates with different angles. 
These codes have logical codewords
\begin{align}
|0_{b,g}\> &\coloneqq
( 
\sqrt{2b-g} |D_{0}^{2b+g}\>)
+
\sqrt{2b+g}|D_{2b}^{2b+g}\>)/\sqrt{4b} ,\notag
\\
|1_{b,g}\> &\coloneqq (
\sqrt{2b-g}  |D_{2b+g}^{2b+g}\>
+
\sqrt{2b+g} |D_{g}^{2b+g}\>) /\sqrt{4b},
\end{align}
where $g$ is a positive integer and $2b \ge g+1$. We call such a code a $(b,g)$-PI code.
In Appendix~\ref{app:bg is AAB}, we show that is a special case of the Aydin {\it et al.} code \cite{aydin2023family} that corrects 
1 error if $g \ge 3$ and $2b-g\ge 3$.
For this code, the transversal gate $Z(\pi/b)^{\otimes n}$  
applies a logical $Z$-rotation with angle $\pi g /b$ as shown in Appendix~\ref{app:bg transversal}.

When $g=3$, the $(b,g)$-PI code is equal to the Kubischta-Teixeira PI code 
on $2b+3$ qubits and with distance $3$ \cite{kubischta2023not}. 
Hence we can think of the $(b,g)$-PI code as a generalization of the $(2b+3)$-qubit Kubischta-Teixeira PI code. 
Note that the $(4,3)$-PI code is the 11-qubit PI code.
When $b=2^{r-1}$ for $r \ge 3$, the Kubischta-Teixeira PI code allows one to implement a logical gate in the $r$-th level of the Clifford hierarchy using transversal gates.

For higher distance codes that support logical
$Z(\pi g/b)$ rotations with transversal gates, we introduce 
$(b,g,m)$-PI codes, which generalize
$(b,g)$-PI codes.
These codes use $N = 2bm+g$ qubits.
In general, for any positive integer $m$, we define
\begin{align}
|0_{b,g,m}\>
=
\sum_{k=0}^m
\sqrt{\binom m k} 
\frac{\gamma_{b,g,m,k} }
{2^m \sqrt{(2m-1)!!}}
|D^N_{2kb}\>,
\end{align}
and correspondingly $|1_{b,g,m}\>=X^{\otimes N}|0_{b,g,m}\>$,
where for $k = 0,\dots,m$, we define
\begin{align}
\gamma_{b,g,m,k}
&=b^{-m/2} \prod_{i=k+1}^m \sqrt{2ib-g} \prod_{j=m-k+1}^m \sqrt{2jb+g} .
\end{align}
Note that the 
$(b,g,1)$-PI codes is a $(b,g)$-PI codes. 
We prove in the Appendix that when $g,2b-g \ge 2t+1$ and $m \ge t$, 
a $(b,g,m)$-PI code has distance at least $2t+1$,
and hence corrects $t$ errors. 
These $(b,g,m)$-PI codes are equivalent to the code of Aydin {\em et al.} only when $m=1$. 
For $m\ge 2$, the $(b,g,m)$-PI codes form a new family of PI codes.

For $m=2$, we find generalized bg codes of the form
\begin{align}
|0_{b,g,2}\>
=&
\frac{\sqrt{(2b-g)(4b-g)}}{4b\sqrt{3}} 
 |D^N_{0}\>\notag\\
&+
 \frac{\sqrt{(4b+g)(4b-g)}}{2b\sqrt{6}} 
|D^N_{2b}\>\notag\\
&+
\frac{\sqrt{(2b+g)(4b+g)}}{4b\sqrt{3}} 
|D^N_{4b}\>.
\end{align}
For $m=3$, we find generalized bg codes of the form
\begin{align}
|0_{b,g,3}\>
=&
\frac{\sqrt{(2b-g)(4b-g)(6b-g)}}{8b\sqrt{15b}}
 |D^N_{0}\>\notag\\
&+
 \frac{\sqrt{(4b-g)(6b+g)(6b-g)}}{8b\sqrt{5b}} 
|D^N_{2b}\>\notag\\
&+
\frac{\sqrt{(4b+g)(6b+g)(6b-g)}}{8b\sqrt{5b}}
|D^N_{4b}\>\notag\\
&+
\frac{\sqrt{(2b+g)(4b+g)(6b+g)}}{8b\sqrt{15b}}
|D^N_{6b}\>.
\end{align}
For $m=4$,
we find
\begin{align}
|0_{b,g,4}\>
=&
\frac{\sqrt{(2b-g)(4b-g)(6b-g)(8b-g)}}{16 b^2\sqrt{105}}
 |D^N_{0}\>\notag\\
&+
 \frac{\sqrt{(4b-g)(6b-g)(8b+g)(8b-g)}}{8b^2\sqrt{105}} 
|D^N_{2b}\>\notag\\
&+
\frac{\sqrt{(6b+g)(6b-g)(8b+g)(8b-g)}}{8b\sqrt{70}}
|D^N_{4b}\>\notag\\
&+
\frac{\sqrt{(4b+g)(6b+g)(8b+g)(8b-g)}}{8b\sqrt{105}}
|D^N_{6b}\>\notag\\
&+
\frac{\sqrt{(2b+g)(4b+g)(6b+g)(8b+g)}}{16b\sqrt{105}}
|D^N_{8b}\>,
\end{align}
and $|1_{b,g,4}\>=X^{\otimes N}|0_{b,g,4}\>$.

\subsection{Even-odd quantum codes}
Many PI codes are even-odd quantum codes. 
Indeed for  $g$ odd, the family of $(b,g)$-PI codes introduced above and the family of $(g,n,u)$-PI codes described in Ref. \cite{ouyang2014permutation} are even-odd quantum codes.
We now show that many stabilizer codes share this property as well. Bacon-Shor codes \cite{shorcode-PhysRevA.52.R2493,bacon-PhysRevA.73.012340,baconshor} that are concatenations of odd length codes are even-odd quantum codes.
The logical zero and one of a repetition code are 
$|0\>^{\otimes n}$ 
and 
$|1\>^{\otimes n}$ 
respectively,
which have even and odd weight computational basis states for odd $n$.
Moreover, in the Hadamard basis, the logical zero and one also are superpositions of only 
even and odd weight computational basis states for odd $n$.
Since the concatenation of an even-odd quantum code with an even-odd quantum code results in an even-odd quantum code, the Bacon-Shor codes that are the concatenation of odd length repetition codes in the standard basis and the Hadamard basis must also be even-odd quantum codes.

The 2D-color codes with even stabiliser weights and an odd number of qubits are also even-odd quantum codes.
Such codes have stabilisers generated by Paulis operators with only $X$-type stabilisers and $Z$-type stabilisers, and admit transversal Clifford gates as their logical gates. 
Note that we can write the logical zero of such a code to be proportional to the state $\sum_{P \in S} P|0\>^{\otimes n}$,
where $S$ denotes the stabiliser of the code. 
The $X$-type generators have even  weights, which implies that the logical zero operator must be a superposition over even weight computational basis states.
Since the transversal $X$ gate is the logical $X$ gate which takes the logical zero state to the logical one state and there is an odd number of qubits, 
the logical one state must be a superposition 
over odd weight computational basis states. 
Hence such 2D-color codes are even-odd quantum codes. 

\subsection{High-level code-switching protocol}
\label{sec:high-level code-switching}

One measurement-free approach to swap two qubits is to apply three controlled-NOT (CNOT) gates. The following circuit shows how to swap a pair of qubits with two CNOT gates if one of the states is known \cite[Eq (1)]{ZLC00}:
\begin{align}
\Qcircuit @C=1em @R=.7em {
    \lstick{\ket{\psi }_A} 
    & \ctrl{1} & \targ & \qw 
    &\rstick{\ket{0 }_A}  \\
   \lstick{\ket{0 }_B} &  
   \targ  &  \ctrl{-1}   & \qw  
   &  \rstick{ \ket{\psi }_B}
}\label{2cnot-swap}
\end{align}
For our code-switching protocol, we treat system $A$ as the stabiliser code on $|A|$ qubits, and system $B$ as a PI code on $|B|$ qubits.
Based on \eqref{2cnot-swap}, we can write the logical variant of the swap circuit using two CNOT gates and a known PI ancilla prepared in the $|0_B\>$. 
Using this idea, 
starting from a stabiliser code where we can perform logical Clifford gates transversally, we switch to a PI code to perform a logical non-Clifford gate transversally before switching back to the stabiliser code. 
We can achieve this using the following protocol:
\[
\Qcircuit @C=1em @R=.7em {
    \lstick{\ket{\psi_A}}  & \ctrl{1} & \targ & \qw & \targ & \ctrl{1} & \qw & \rstick{\bar Z_A(\omega')\ket{\psi_A}} \\
   \lstick{\ket{0_B}} &    \targ  &  \ctrl{-1}   & \gate{Z(\omega)^{\otimes |B|}}    & \ctrl{-1} & \targ  & \qw & \rstick{\ket{0_B}} 
}
\]
Using the 11-qubit PI code, we can obtain the logical $T$ gate on the stabiliser code with $\omega = 3\pi/4$ and $\omega' = \pi/4$.
Using the 7-qubit PI code, we can obtain a logical non-Clifford on the stabiliser code with $\omega = 2\pi/5$ and $\omega' = 4\pi/5$.

Since both of these codes have distance equal to 3, they both allow the correction of any single-qubit error on the logical information.

We can code-switch to PI codes  
which admit transversal implementation of other exotic non-Clifford gates. 
Now let $g$ and $b$ be coprime so that there exists an integer $k$ such that ${\rm mod}(k g,b)=1$.
Then the $(b,g)$-PI code on $(2b+g)$ qubits can implement the logical $Z(\pi u /b)$ gate transversally on the stabiliser code with $\omega = u k \pi/b$ for any integer $u$, while allowing the correction of  
of a single error if $g\ge 3$ and $2b-g\ge 3$.

\section{Geometric phase gates}

Our proposed mechanism to process quantum information in PI codes uses geometric phase gates (GPGs) arising from coupling quantum spins to a coherently controllable bosonic mode
\cite{molmer_multiparticle_1999,Alfredo-Luis_2001}. This flavor of GPG is a standard tool for entangling gates in trapped ion quantum processors \cite{Leibfried2003}, where the bosonic mode is a quantized motional mode. Other suitable architectures, as outlined in Ref.~\cite{jandura2023nonlocalmultiqubitquantumgates}, include trapped atoms in optical cavities
\cite{PhysRevLett.91.097905}, superconducting qubits coupled via a driven microwave resonator \cite{PhysRevX.9.041041}, or polar molecules coupled to microwave stripline cavities \cite{André2006}.

To understand the action of the gate, consider the spin operator 
$\hat{w}_{\Gamma}=\frac{|\Gamma|}{2}{\bf 1}-\hat{J}^z_{\Gamma}$
that counts the Hamming weight of spin states,
where 
$\hat{J}^z_{\Gamma}=\frac{1}{2}
\sum_{j=1}^{|\Gamma|} Z_{\Gamma,j}$
 is the collective angular momentum operator.
 Here, $Z_{\Gamma,j}$ applies a phase flip on the $j$th qubit of system $\Gamma$ and applies the identity operator on all other qubits of $\Gamma$.
We describe three types of GPGs here. The first two use an interaction between the spins in $\Gamma$ and the mode that is linear in the creation and annihilation operators of the mode,
while the third
uses a dispersive interaction that is quadratic in bosonic operators:
\begin{equation}
U=\left\{\begin{array}{cc} e^{i\phi \hat{w}^2_{\Gamma}} & \text{Linear\ GPG-A\ \cite{molmer_multiparticle_1999,jandura2023nonlocalmultiqubitquantumgates}}\\
e^{-i I/(\delta-\hat{w}_{\Gamma} g^2/\Delta)} & \text{Linear\ GPG-B \ \cite{jandura2023nonlocalmultiqubitquantumgates}}
\\
e^{-i 2 \chi\sin(\theta \hat{w}_{\Gamma}+\beta)} & \text{Non-linear\ GPG \ \cite{PhysRevA.65.032327}}\end{array}\right.
\end{equation}
In the Linear GPG-A gate, also known as the M\o lmer-S\o rensen gate, the angle $\phi\in[0,2\pi)$ is fully tunable through the control pulses used in the implementation. Recently it has been shown using optimal control methods that this gate together with global rotations of the spins provides for efficient exactly universal state and unitary synthesis in the Dicke state space \cite{Gutman_2024,bond2023efficient,srivastava2024entanglementenhancedquantumsensingoptimal}. In the Linear GPG-B gate introduced in Ref.~\cite{jandura2023nonlocalmultiqubitquantumgates} the relevant parameters are intensity $I$, and the detunings $\delta,\Delta$ which are tunable within a range whereas the coupling strength $g$ between the spins and cavity is usually treated as constant. This gate enables a broader class of entangling gates like the multi-controlled phase gate $C_{N-1}(Z)$. 
For the non-linear GPG, the angles $\chi,\theta,\beta\in[0,2\pi)$ are fully controllable through a combination of dispersive interaction evolutions punctuated by displacement operations on the mode. Complemented by global spin rotations this gate is also exactly universal for state and unitary synthesis in the Dicke space \cite{johnsson2020geometric}.

\begin{figure}[htp]
\includegraphics[width=0.9\columnwidth]{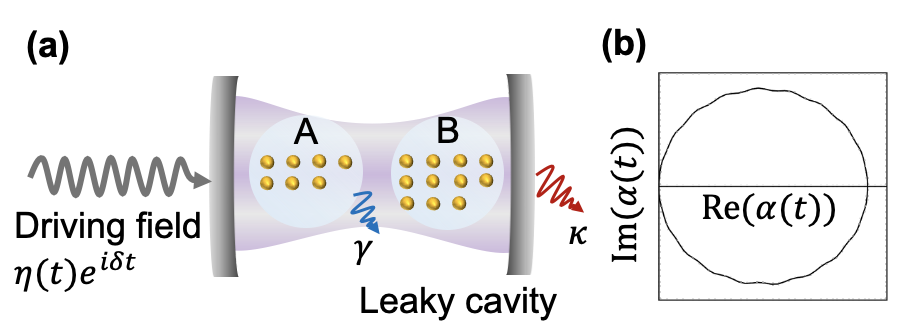}
\caption{(a) Two registers of qubits are coupled to a common cavity mode with strength $g$. The cavity mode decays at a rate $\kappa$, while the spins decay from their excited states at a rate $\gamma$. The cavity acts as a mediator of interactions between the qubits, allowing the two registers to influence one another through their coupling to the same cavity mode. (b) Illustration of a closed trajectory in phase space of a coherent state $\alpha(t)$ of the cavity mode associated to an eigenspace of the spin operator $\hat{w}_{\Gamma}$ for a Linear GPG. After the gate, the mode and the spins are disentangled and the phase accumulated on the eigenspace depends on the area of the trajectory. 
}
\label{fig:sketch}
\end{figure}

\section{PI ancilla state preparation}
\label{stateprepapp}

Logical state preparation of a target PI state $|\Psi_{\rm t}\>$ can be achieved  using the linear GPG-A gate, which we simply denote (l-GPG) \cite{bond2023efficient,srivastava2024entanglementenhancedquantumsensingoptimal}. Starting in the product state $\ket{D_0^N}$, one applies a  sequence of gates:
\begin{align}
\ket{\Psi}=&\prod_{p=1}^{P}[R(\theta_p,\xi_p,\gamma_p) U(\phi_p)]\ket{D_0^N}
\label{eq:prep_linear}
\end{align}
that prepares an output approximating $|\Psi_t\>$,
where $P$ denotes the number of pulses. The GPGs $U(\phi_p)$ are interleaved with global rotations around an arbitrary axis as
\begin{align}
    R(\theta_p,\xi_p,\gamma_p) &= R_z(\theta_p)R_y(\xi_p)R_z(\gamma_p)  \notag\\
                               &=e^{i\theta_p\hat{J}^z}e^{i\xi_p\hat{J}^y}e^{i\gamma_p\hat{J}^z}.
\end{align}
The parameters $\theta_p$, $\xi_p$, $\gamma_p$ and $\phi_p$ are chosen numerically to minimize the infidelity $1-|\langle\Psi\ket{\Psi_{\rm t}}|^2$.
Here, for simplicity, instead of using the definition of the l-GPG above with $U(\phi)=e^{i\phi \hat{w}_{\Gamma}^{2}}$, we exploit the decomposition
\begin{align}
e^{i\phi_p\hat{w}_{\Gamma}^{2}}=e^{i\phi_p(\frac{|\Gamma|^2}{4}-|\Gamma|\hat{J}_{\Gamma}^z+\hat{J}_{\Gamma}^{z^{2}})},  
\end{align}
and instead define $U(\phi_p) = e^{i\phi_p\hat{J}_{\Gamma}^{z^{2}}}$, since the constant and the $\hat{J}_{\Gamma}^z$ term in the exponent can be absorbed into the global rotation $R$ during the optimization. This allows us to treat the contribution by evolution generated by $\hat{J}_{\Gamma}^z$ as part of a uniform rotation without affecting the overall dynamics of the system.

We now discuss the state preparation using l-GPGs in the presence of errors. We assume that the rotations are noise-free, as they can typically be performed fast relative to the spin mode coupling $g$. We apply the error model presented in Ref.~\cite{jandura2023nonlocalmultiqubitquantumgates} in the presence of losses from cavity mode decay at rate $\kappa$ and the spontaneous emission of the spin excited state at rate $\gamma$. In the Dicke subspace, the ideal $p$th l-GPG $U(\phi_p)$ is modified to the erroneous mapping as
\begin{align}
 \mathcal{E}(\phi_p,\rho_{p-1})=\sum_{n,m=0}^{N} \bra{D_n^N}\rho_{p-1}\ket{D_m^N}f_{n,m}(\phi_p)\ket{D^N_n}\bra{D^N_m},
\label{eq:prep_linear_err}
\end{align}
where
\begin{align}
f_{n,m}(\phi_p)=&e^{-(m-n)^2\frac{|\phi_p|}{2}\sqrt{\frac{2(1+2^{-N})}{C}}-(m+n)\frac{|\phi_p|}{2}\frac{1}{\sqrt{2C(1+2^{-N})}}}  e^{-i(n^2-m^2)\phi_p}
\end{align}
and $C=g^2/\kappa\gamma$ is the cooperativity of the cavity supporting the mode. Here, we apply the absolute value on the parameter $\phi_p$ as it can take negative values during numerical optimization. 
Note the map $\mathcal{E}$ is not trace preserving as it treats decay from the excited state as leakage. This provides a lower bound on the process fidelity \cite{jandura2023nonlocalmultiqubitquantumgates}.

The full preparation is a concatenation of erroneous GPGs $\mathcal{E}(\phi_p,\rho_{p-1})$ interleaved with error-free rotations $R(\theta_p,\xi_p,\gamma_p)$, which is written as,
\begin{align}
    \mathcal{E}_{\rm l-GPG}=\mathcal{E}^{P}_{\rm l-GPG} \circ \mathcal{E}^{P-1}_{\rm l-GPG} \circ \cdots \circ \mathcal{E}^{1}_{\rm l-GPG},
\label{eq:full_prep_lGPG}
\end{align}
where $\mathcal{E}^{p}_{\rm l-GPG} = R(\theta_p,\xi_p,\gamma_p) \mathcal{E}(\phi_p,\rho_{p-1}) R^{\dagger}(\theta_p,\xi_p,\gamma_p)$. We apply Eq.~(\ref{eq:full_prep_lGPG}) to the initial state $\rho_0 = \ket{D^N_0}\bra{D^N_0}$ and use infidelity $1-\bra{\Psi_{\rm t}}\rho_{\rm P} \ket{\Psi_{\rm t}}$ as the cost function, where $\rho_{\rm P}$ is the output state after $P$ number of pulses are applied, and $\ket{\Psi_{\rm t}}$ is the target state. In our optimization, we fix the total number of pulses $P$ and utilize MATLAB's built-in toolbox, which employs the Sequential Quadratic Programming (SQP) method, to minimize the cost function and obtain lists of parameters $\theta_p$, $\xi_p$, $\gamma_p$ and $\phi_p$. Note that the optimization method applied is non-deterministic. 
\newline

\section{Implementing entangling logical gates}
The different GPG's discussed above enable use of different logical operations for code switching. 
The non-linear GPG can implement logical CNOT gates between a non even-odd PI code, like the PI-7 code, and an even-odd stabilizer code (see Appendix~\ref{CNOTNLGPG}).
However, the dispersive interactions between the spins and the cavity mode necessary to instantiate the non-linear GPG are not directly accessible in many physical systems \cite{johnsson2020geometric}, and the dispersive interaction strengths tend to be smaller than linear interaction strengths implying slower gates. We therefore focus most of our analysis on using linear GPGs.

\subsection{Using linear GPGs with the PI-11 code}

Instead of the switching circuit shown in Sec.~\ref{sec:high-level code-switching}, we discuss a variation of it, with the idea that logical CZ gates are easier to be implemented than CNOTs. In what follows we show how to implement the following circuit for code switching between the Steane code (system A) and the 11-qubit PI code (system B):
\[
\Qcircuit @C=1em @R=.7em {
    \lstick{\ket{\psi_{A}} }  & \ctrl{1} & \gate{\overline{H}} & \ctrl{1} & \qw & \ctrl{1} & \gate{\overline{H}} & \ctrl{1} & \qw & \rstick{\bar R_z(\omega)\ket{\psi_{A}} } \\
   \lstick{\ket{+_{\rm pi11}} } &    \control \qw  & \gate{\overline{H}}&  \control \qw   & \gate{\bar R_z(\omega)}   & \control \qw & \gate{\overline{H}} & \control \qw & \qw & \rstick{\ket{+_{\rm pi11}}}
   }
\]

 \begin{figure}[htbp]
\includegraphics[width=0.9\columnwidth]{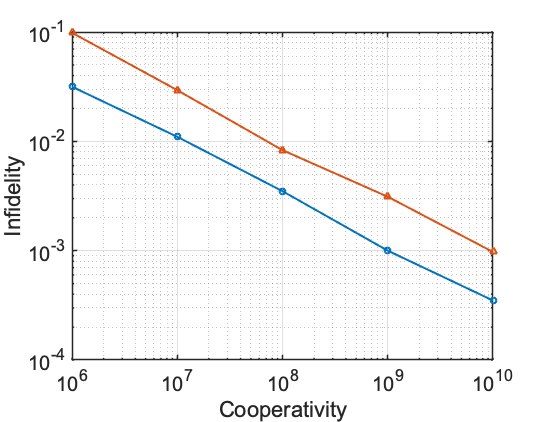}
\caption{Process infidelity $1-F_{\rm pro}(\mathcal{E}_{\overline{H}},\overline{\mathcal{H}})$ for implementing the logical Hadamard gate (orange, triangles) on the PI-11 code, and state infidelity $1-\bra{+_{\rm pi11}}\rho\ket{+_{\rm pi11}}$ for preparing the logical $\ket{+_{\rm pi11}}$ state (blue,circles), using l-GPGs as a function of cooperativity $C$. In both cases $P=10$ gate sequences are used for the state preparation steps.}
\label{fig:infidelity}
\end{figure}
\subsubsection{Implementing the $\rm{C}_{\textsf{A(B)}}{\rm Z}_{\textsf{B(A)}}$ gate }
To realize the $\rm{C}{\rm Z}$, which acts symmetrically on the control and target, it suffices to construct a gate that up to global phase on the logical subspace, applies a minus sign on the logical basis state $\ket{1}_A\ket{1}_B$ and acts trivially on the other logical basis states. 
When both $A$ and $B$ are even-odd quantum codes, as is the case here, then we have the following decomposition in terms of three l-GPGs
\begin{equation}
\rm{C}{\rm Z}=e^{i \frac{\pi}{2} \hat{w}^2_{A\cup B}} 
e^{-i \frac{\pi}{2} \hat{w}^2_{A}}e^{-i \frac{\pi}{2} \hat{w}^2_{B}}. 
\label{CZinGPG}
\end{equation}
Here an l-GPG involving the operator $\hat{w}_{A\cup B}=\hat{w}_{A}+\hat{w}_{B}$ can be obtained by coupling all the spins constituting codes $A$ and $B$ to the same bosonic mode. The process infidelity for performing each of the three l-GPGs in this sequence is \cite{jandura2023nonlocalmultiqubitquantumgates}
\begin{equation}
    1-F=\frac{\pi|\Gamma|}{2\sqrt{2(1+2^{-|\Gamma|})C}}.
\end{equation}.

\subsubsection{Implementing the logical $\overline{\rm{H}}$ gate}
Next, we discuss a way to implement a logical Hadamard gate $\overline{H}$ for the PI-11 code.
In the code space, the eigenvalues of $\overline{H}$ are $\pm 1$ with corresponding eigenstates:
\begin{align}
     |\lambda_{+}\rangle &= \frac{1}{\sqrt{2(2+\sqrt{2})}} [(1+\sqrt{2})|0_{\rm pi11}\rangle+|1_{\rm pi11}\rangle ],  \notag \\
     |\lambda_{-}\rangle &= \frac{1}{\sqrt{2(2-\sqrt{2})}} [(1-\sqrt{2})|0_{\rm pi11}\rangle+|1_{\rm pi11}\rangle ].  
 \label{eq:H_eigenvector}
 \end{align}

We can use the method of PI state preparation to construct a unitary on the entire Dicke space which acts correctly in the logical subspace. Using the spectral decomposition, $\overline{H}$ must have the form,
\begin{align}
    \overline{H} = |\lambda_{+}\rangle\langle\lambda_{+}| -|\lambda_{-}\rangle\langle\lambda_{-}|+\sum_s e^{i\beta_s}\ket{\beta_s}\bra{\beta_s},
 \label{eq:logical_H}
 \end{align}
here $\{\ket{\beta_s}\}$ are any set of orthonormal vectors which are contained completely in the ortho-complement to the code space, and $\{\beta_s\}$ are any eigenvalues of on those states. 
We can construct this gate with linear GPGs using the method of state preparations adapted for subspaces \cite{PhysRevA.71.052318}. Specifically we can write
\begin{align}
    \overline{H} = W e^{i \pi \ket{D^N_N}\bra{D^N_N}}W^{\dagger}
    \label{eq:Hdecomp}
 \end{align}
where $W$ is any unitary extension of the state preparation: $W\ket{D^N_N}=\ket{\lambda_-}$. The diagonal phase gate $e^{i \pi \ket{D^N_N}\bra{D^N_N}}$ is the $C_{N-1}(Z)$ gate, where $N=11$ for the PI-11 code. This gate can be synthesized from linear GPG-B gates using the methods in Ref.~\cite{jandura2023nonlocalmultiqubitquantumgates}.

We characterize the performance of the implemented Hadamard gate described by Eq.~(\ref{eq:Hdecomp}) in the presence of losses using the fidelity-based measure $F_{\rm pro}(\mathcal{E}_{\overline{H}},\overline{\mathcal{H}})$ proposed in Ref.~\cite{gilchrist2005distance} to compare the performance of actual realized process $\mathcal{E}_{\overline{H}}$ against the ideal unitary process $\overline{\mathcal{H}}(\rho)=\overline{H}_{\rm ide}\rho\overline{H}_{\rm ide}^{\dagger}$, where $\overline{H}_{\rm ide}$ is the ideal Hadamard gate in logical subspace. The actual process for the implementation of logical Hadamard gate is given by
\begin{align}
    \mathcal{E}_{\overline{H}}=\mathcal{E}_{\rm l-GPG}\circ\mathcal{E}_{\rm ph}\circ\mathcal{E}^{\rm R}_{\rm l-GPG},
\label{eq:sketch_map_H}
\end{align}
where $\mathcal{E}_{\rm ph}(\rho)=F_{\rm ph}C_{N-1}(Z)\rho C^{\dagger}_{N-1}(Z)$ is the erroneous map of the middle phase gate in Eq.~(\ref{eq:Hdecomp}). Here, $F_{\rm ph}$ denotes the fidelity of the multi-qubit phase gate. \YJ{This multi-qubit phase gate is implemented in the weak-drive regime, where an adiabatic evolution of the joint cavity-qubit system is exploited to realize the operation. The dominant noise arises from cavity decay $\kappa$ and spontaneous emission from the excited state $\gamma$, which are incorporated through the cooperativity $C$. Numerical simulations reported in Ref.~\cite{jandura2023nonlocalmultiqubitquantumgates} (Sec. IV B, Fig.~3(b,c)) show that these error channels lead to an infidelity scaling of approximately $1-F_{\rm ph} \sim 1.8 \times N/\sqrt{C}$, which we quote here for completeness.}
$\mathcal{E}_{\rm l-GPG}$ ($\mathcal{E}^{\rm R}_{\rm l-GPG}$) is the (reverse of) preparation mapping associated with $W$ ($W^{\dagger}$) in the presence of losses, as detailed in Eq.~(\ref{eq:full_prep_lGPG}). Here, for simplicity, we optimize parameters solely for the preparation process and apply the same parameters to the reverse mapping without additional optimization. The fidelity measure $F_{\rm pro}(\mathcal{E}_{\overline{H}},\overline{\mathcal{H}})$ is computed based on process matrices, which necessitates the superoperator formalism for each mapping in Eq.~(\ref{eq:sketch_map_H}). The detailed calculations are provided in Appendix~\ref{app:fidelity_measure}.

In Fig.~\ref{fig:infidelity}, we plot the infidelity for: the implementation error of the Hadamard gate (orange), and the preparation error (blue) for the PI-11 code as a function of the cooperativity $C$. The simulation is performed in the presence of loss errors as described in Eq.~(\ref{eq:prep_linear_err}). Following the approximation in Ref.~\cite{jandura2023nonlocalmultiqubitquantumgates} (Eq. (25)), the infidelity of the $\overline{H}$ implementation can be estimated as $1-F_{\rm pro}(\mathcal{E}_{\overline{H}},\overline{\mathcal{H}}) \sim 21.78 \times N/\sqrt{C}$, where we consider $18$ non-trivial GPG operations for the preparation step and its inverse, with $\theta=\pi/2$.
Similarly, the infidelity for state preparation is estimated by $1-\bra{+_{\rm pi11}}\rho\ket{+_{\rm pi11}} \sim 9.99 \times N/\sqrt{C}$, where $9$ non-trivial GPG gates are considered, again with $\theta=\pi/2$. Our simulation results indicate a somewhat better performance with the infidelity scaling as $1-F_{\rm pro}(\mathcal{E}_{\overline{H}},\overline{\mathcal{H}}) \sim 8.29 \times N/C^{0.4985}$ and $1-\bra{+_{\rm pi11}}\rho\ket{+_{\rm pi11}} \sim 2.80 \times N/C^{0.4953}$, indicating lower infidelity compared to the approximation.

\section{Approximation of the super golden gate}

The most efficient single-qubit gate set is given by representations of the binary icosahedral group $2I$ (which can be implemented with the Pollatsek-Ruskai code) and the super golden gate $\tau_{60}$ given in \eqref{supergolden}. This efficiency is measured in terms of gate counts.
Here we give an approximate decomposition of $\tau_{60}$ which can be achieved by code-switching between bg-PI codes, the Pollatsek-Ruskai code that implements 2I, and PI codes that implement the transversal $T$ gate. 

Let $S=Z(\pi/2)$, and $T=Z(\pi/4)$. Let $H$ denote the Hadamard gate, and $Z$ denote $Z(\pi)$.
Noting that the super golden gate squares to the identity, we can write it in an Euler decomposition as 
\[
\tau_{60}=
 i 
e^{-i\frac{\pi}{8}Z}e^{-i\frac{\theta}{2} Y}e^{-i \frac{3\pi}{8}Z},
\]
where $\theta=2\cos^{-1}{\frac{2+\varphi}{\sqrt{5\varphi +7}}}$. We can further simplify
\[
\begin{array}{lll}
\tau_{60}&=&Te^{-i\frac{\theta}{2} Y}ZT^{\dagger}\\
&=&
e^{-i \theta/2} 
TSH Z(\theta) HS^{\dagger}ZT^{\dagger}.
\end{array}
\]
Now $\theta/\pi$ can be approximated by a rational. Defining
\[
 \begin{array}{lll}
 \tilde{\tau}_{60}(\gamma)&=&TSH Z(\gamma)HS^{\dagger}ZT^{\dagger}
  \\
  &=&TF^{\dagger} Z(\gamma)F Z T^{\dagger}
  \end{array}
\]
we find $\|\tilde{\tau}_{60}(\gamma)-\tau_{60}\|< 10^{-6}$
for $\gamma=\frac{\pi 167}{704}$,
where $
\| \cdot \|
 $ denotes the operator norm minimized over global phases.
This could be achieved using: a $(704,167)-$PI code for the small angle $Z$ rotation, a smaller $(b,g)-$PI code for transversal $T$ gates, and a PI code like the Pollatsek-Ruskai code admitting transversal gates in the group $2$I (noting $F^2=F^{\dagger}$ up to a global phase).

\section{On the fault-tolerance of our scheme}

\noindent
{\bf Spin system errors}:-
An operator $Q$ is fault-tolerant on the spin-system for QEC codes on system $A$ and $B$ that correct $t$ errors if for any multiqubit Pauli operator $P$ of weight $t$, there exists an operator $A$ with weight at most $t$ such that $Q P = A Q$. Operationally, this means that when a weight $t$ error $P$ propagates across a quantum circuit $Q$, the error $P$ evolves in the Heisenberg picture to an error $A$ that does not have too many errors. This definition of fault-tolerance on the spin-system is consistent with the definition of fault tolerance proposed in Ref~\cite{AGP05}. For us, the 7 qubit and 11 qubit PI codes correct a single error, and hence $t=1$.  

Each constituent gate in our GPGs and conditional GPGs is spin fault-tolerant, which demonstrates the spin fault-tolerance of our protocol.
These constituent gates are the rotation operators 
$R(\theta \hat J^z_\Gamma)$ and $R(\theta \hat J^y_\Gamma)$ and the mode displacement operator. 
Mode displacement operators are trivially 
spin fault-tolerant because they commute with all spin errors.

To demonstrate the spin-fault tolerance of the rotation operators, it suffices to 
focus our attention on $R(\theta \hat J^z_\Gamma)$.
The rotation operator $R(\theta \hat J^z_\Gamma)$ 
is trivially spin fault-tolerant with respect to $Z$ errors because it commutes with $Z$ errors on the spin system. 
From Ref.~\cite{ouyang2021avoiding}, 
 the rotation operator also obeys the commutation relations 
$
R(\theta \hat J^z_\Gamma) X_{\Gamma,k} 
=
e^{i \theta Z_{\Gamma, k}} X_{\Gamma,k}
R(\theta \hat J^z_\Gamma)   $
and
$
R(\theta \hat J^z_\Gamma) Y_{\Gamma,k} 
=
e^{i \theta Z_{\Gamma, k}} Y_{\Gamma,k}
R(\theta \hat J^z_\Gamma) $ for $X$ and $Y$ errors respectively.
These commutation relations imply that both an $X$ and $Y$ error on the $k$th qubit 
evolves under the rotation operator into an error that remains localized on the $k$th qubit.
Hence the rotation operator $R(\theta \hat J^z_\Gamma)$ is indeed spin fault-tolerant. 
\newline

  \noindent
{\bf Mode errors}:-
Loss or gain errors can occur on our bosonic mode, which we model using the lowering operator $\hat a$ and the raising operator $\hat a^\dagger$ respectively.
We also consider phase errors on the bosonic mode, which are generated by the number operator $\hat a^\dagger \hat a$.
We say that an operator $Q$ is fault-tolerant on the bosonic system, or {Bose} fault-tolerant, if for any operator $\bar B$ that is either a ladder operator or a number operator, there exists an operator $A$ with weight at most $t$ on the spin systems such that $Q\bar B = AQ$.


The rotation operator $R(\theta \hat w_{\Gamma})$
obeys the commutation relations   
$R(\theta \hat w_{\Gamma}) \hat a
=
e^{-i\theta \hat w_{\Gamma}} 
\hat a R(\theta \hat w_{\Gamma}) $
and 
$ R(\theta \hat w_{\Gamma}) \hat a^\dagger
=
e^{i\theta \hat w_{\Gamma}} \hat a^\dagger R(\theta \hat w_{\Gamma}) $.
This shows that ladder operator errors on the bosonic mode can propagate to the spin system as unitaries. 
Hence the rotation operator is not {Bose} fault-tolerant with respect to ladder operators, unless $\theta$ is very close to an integer multiple of $2\pi.$
On the other hand the rotation operator is trivially {Bose} fault-tolerant with respect to the number operator.

While the mode displacement operator does not commute with the ladder operators, it does not introduce additional errors to the spin system because ladder operator errors still propagate to ladder operator errors. Hence the displacement operator is {Bose} fault-tolerant with respect to ladder operators.
The displacement operator is furthermore {Bose} fault-tolerant with respect to number operators because it propagates the number operator to a linear combination of the number operator and ladder operators. 
Hence the displacement operator is {Bose} fault-tolerant.

To mitigate the impact of the rotation gate not being {Bose} fault-tolerant in general,
we should aim to perform both the displacement gate and the rotation gate as quickly and as accurately as possible, minimizing the number of gain or loss errors. On the other hand, our protocol is fault-tolerant with respect to bosonic phase errors.

\GKB{
The dominant error of our scheme arises from finite cooperativity of the cavity. Recent experimental work \cite{zhang2025opticallyaccessiblehighfinessemillimeterwave} demonstrates precision control with a high finesse Fabry–P\'erot cavity which provides for a single particle cooperativity
of $C = 6.75\times 10^5$ utilizing circular Rydberg transitions coupled to a microwave transition. Higher cooperativities of $10^8-10^9$ are achievable \cite{Kuhr:2006} using millimeter wave cavities, though simultaneously achieving optical access is challenging. An alternative to using photonic modes is to use phononic modes, like a motional mode of an ion trap array. Variational quantum circuits using the same gates as our scheme here have already been demonstrated \cite{Marciniak2022} for the preparation of squeezed quantum states for quantum metrology. For this architecture, the mode is very long-lived and the cooperativity is not the relevant parameter; rather the error associated with coupling to the motional mode is dominated by laser fluctuation noise. The analysis in Ref.~\cite{bond2023efficient} shows that in the presence of laser noise that induces a fluctuation in the phase $\phi$ of the linear GPG-A by an amount $\delta\phi=0.01\%$, the full state preparation infidelity of a random state in the Dicke subspace of an $N=10$ qubit system is below $10^{-6}$.} \YJ{If we consider larger codes, such as the $(6,5,2)-$PI code, referred to as the PI$-29$ code and capable of correcting two errors, the same reference reports that the corresponding error can remain below $10^{-4}$ for $N=30$.}

\GKB{
\section{Performance}
To highlight the advantages of our measurement-free code switching scheme relative to others based on measuring stabilizers, we can compare the control complexity required. For simplicity, we consider switching between two even-odd codes, one code $A$ having $N_A$ qubits and admitting transversal Clifford gates, and a code $B$ with $N_B$ qubits admitting a transversal non-Clifford gate.}

\GKB{
The gate cost, counted by number of transversal gates or cavity pulses, to switch into a PI code $B$ using our method is as follows. State preparation requires at most $2N_B$ global pulses~\cite{bond2023efficient}. The logical Hadamards can be done as one transversal gate on the stabilizer code, and on the PI code via \eqref{eq:Hdecomp} $2N_B$ gates for the state preparation unitary $W$, $2N_B$ gates for $W^{\dagger}$, and $N_B-1$ gates for the $C_{N_B-1}(Z)$ gate \cite{jandura2023nonlocalmultiqubitquantumgates}. Finally the logical CZ gate is done with $3$ cavity pulses and there are two of them. This gives a total count of $7N_B+6$ gates for switching into the PI code. 
\yo{The total round-trip switching cost, with the transversal implementation of the non-Clifford gate is then
at most $14N_B+13$.}
Two noteworthy points are that the gate count is independent of $N_A$ and the native nonlocality of the cavity-mediated entangling gates means no qubit swapping or shuttling is needed at any point. Additionally, only minimal addressability is required; namely it is only needed to distinguish qubits in one code versus the other.}

\GKB{
The cost to switch from one stabilizer code $A$ with some transversal Clifford gates to another stabilizer code $B$ with a transversal non-Clifford gate can be fairly compared to our scheme by considering a non-fault-tolerant set of stabilizer measurements: first on the stabilizers for $B$ to prepare a logical state, and second for parity checks on $A\cup B$. 
Following the construction of Ref.~\cite{jain2025transversalcliffordtgatecodes}, we choose for code $A$ a doubly-even CSS code that includes the Hadamard among its set of transversal logical gates and for code $B$ a triorthogonal code that supports a transversal logical $T$ gate.}
\yo{The double-even CSS codes in Ref.~\cite{jain2025transversalcliffordtgatecodes} are even-odd codes, because the classical codes used in the CSS construction have even weights, and any doubly-even CSS code admits a strongly transversal logical $X$ and has odd length.}

\yo{We analyze the gate complexity of code-switching from code A to code B, where code B is either a triorthogonal code of distance $d$ or our $(4,3,(d-1)/2)$-PI code. Note that such PI codes support a transversal logical $T$ gate and are furthermore even-odd codes because the parameter $g$ in the $(b,g,m)$-PI code is odd. 
We supply a lower bound and an upper bound for the gate count of using a triorthogonal code and a PI code for code B respectively. 
 For the lower bound, the gate count is at least $w(N_B-1)$ considering non-fault-tolerant stabilizer measurements,
 where $w$ is the minimum weight of the triorthogonal code's checks. 
 Table~\ref{tab:roundtrip_costs} shows the shortest triorthogonal codes for a given distance $d=3,5,7,9,11$. 
 For these triorthogonal codes, we prove that $w \ge 8$. For $d=3,5,7$, the triothogonal codes here are triply even, and hence $w\ge 8$. For the higher distance codes here, the average weight of the triorthogonal code of distance $d$ is at least the minimum of the weights of the triorthogonal codes here of distance $d-2$, the length of the triorthogonal code of distance $d-2$, 
 and twice the distance of the self-dual classical code used in the doubling construction of Ref.~\cite{jain2025transversalcliffordtgatecodes}.
 By induction, this minimum is at least 8 for all odd $d\ge 9$.
Therefore, the gate count of using triorthogonal codes according to the doubling construction of Ref.~\cite{jain2025transversalcliffordtgatecodes} in Table~\ref{tab:roundtrip_costs} 
 is at least $8(N_B-1)$ considering non-fault-tolerant stabilizer measurements. 
The simple lower bounds we supply apply not only to both measurement-based and measurement-free code-switching, but also to magic-state distillation techniques. 
The gate count would be even higher if we consider (1) spatial locality constraints on the implementation for instance in 2D, which would impose an additional $O(\sqrt N_B)$ overhead, and (2) the fact that the weights of the stabilizer checks grow as $O(\sqrt N_B)$.  
The gate count for our scheme, on the other hand, is upper bounded by $7N_B+6$. A key benefit of PI codes therefore is most pronounced when their lengths are considerably shorter than their corresponding triorthogonal codes. This is evidenced by the inequality 
$7N_{{\rm PI}, B} + 6 < 8 (N_{{\rm tri}, B}-1)$,
for all length $N_{{\rm tri}, B}$ triorthogonal and length $N_{{\rm PI}, B}$ PI codes of distances $d=3,5,7,9,11$ involved.}

\begin{table*}[htbp]
\centering
\begin{threeparttable}
\begin{tabular}{@{}c|l|l|r|r|@{}}
\toprule
\textbf{Distance $d$} & \textbf{Stab. Code A (Clifford)} & 
\textbf{Stab. Code B} & 
\textbf{PI Code B} & 
\textbf{Gate Count (A$\to$B)} \\
\midrule
3  & $[[7,1,3]]$   & $[[15,1,3]]$   
& 
PI-(4,3,1)=((11,2,3)) & 
$\ge 112$, $\le 83$\\
5  & 
$[[17,1,5]]$     & 
$[[49,1,5]]$     & 
PI-(4,3,2) = ((19,2,5))  & 
$\ge 384$, $\le 139$
\\
7  & 
$[[23,1,7]]$      &  $[[95,1,7]]$       & 
PI-(4,3,3) = ((27,2,7))
& $\ge 752$, $\le 195$ 
\\
9 & 
$[[45,1,9]]$    & 
$[[185,1,9]]$  
& 
PI-(4,3,4) = ((35,2,9))
& 
$\ge 1472$, $\le 251$
 \\
11 & 
$[[47,1,11]]$    &  $[[279,1,11]]$     & 
PI-(4,3,5) = ((43,2,11))
& $\ge 2224$, $\le 307$\\
\bottomrule
\end{tabular}
\caption{{\bf Comparison of gate costs for switching from code A to code B, where code B is either a triply even stabilizer code or a PI code.} The stabilizer codes of type A and B are respectively doubly even and triorthogonal codes given in \cite[Table 1,2]{jain2025transversalcliffordtgatecodes}. We calculate a gate lower bound and upper bound for switching from code A to stabilizer code B and a PI-$(4,3,m)$ code B respectively.
Here, code A supports transversal logical Cliffords, and code B supports transversal logical $T$ gates. All codes involved here encode a single logical qubit.
\label{tab:roundtrip_costs}
}
\end{threeparttable}
\end{table*}

\section{Conclusion}

We have presented a method for code switching between a stabilizer code and a permutation invariant code, and along the way have introduced a family of PI codes with a tunable set of transversal, non-Clifford gates. It is noteworthy that our measurement-free code-switching scheme is also applicable to transitions between two stabiliser codes, provided they satisfy the even-odd quantum code structure. In comparison to the measurement-based approach, our scheme eliminates the overhead associated with measurements and reduces the gate operation complexity by utilising collective gate implementations. 
Given the recent development on how quantum error correction on arbitrary permutation-invariant codes can be implemented using geometric phase gates \cite{ouyang2023quantum}, we expect it to be possible to extend our work to a fully fault-tolerant setting, using ideas from fault-tolerance theory \cite{AGP05,gottesman2014fault,chao2018flag,Chao2020.anyflag}.
Additionally, a numerical study of the efficiency of our protocol for fault-tolerant quantum computation is an interesting line of future line of enquiry that is currently beyond the scope of our paper.

\section{Acknowledgements}\label{eq:acknow}

Y.O. acknowledges support from EPSRC (Grant No. EP/W028115/1) and also the EPSRC funded QCI3 Hub under Grant No. EP/Z53318X/1.
G.K.B. and Y.J. acknowledge support from  the Australian Research Council Centre of Excellence for Engineered Quantum Systems (Grant No. CE 170100009).

\bibliography{ref}

\appendix

\section{AAB+ code}
\label{app:AAB}

The AAB+ type code is specified by three integer parameters $g,m$ and $\delta$, with logical codewords given by 
\begin{align}
    |0_{(g,m,\delta,+)}\> &= 
    \sum_{l\ {\rm even} } \gamma b_l |D_{gl}^n\> +  
    \sum_{l\ {\rm odd} } \gamma b_l |D_{n-gl}^n\> ,
    \\
    |1_{(g,m,\delta,+)}\> &= 
    \sum_{l\ {\rm even} } \gamma b_l |D_{n-gl}^n\> +  
    \sum_{l\ {\rm odd} } \gamma b_l |D_{gl}^n\> ,
\end{align}
where $b_l = \sqrt{\binom ml / \binom {n/g-l}{m+1}}$ and $\gamma = \sqrt{ \binom{n/(2g)}{m} \frac{n-2gm}{g(m+1)} }  $, and $n = 2gm+\delta + 1$.
The code corrects $t$ errors when $g \ge 2t+1$, $m \ge t$ and $\delta \ge 2t$.

In Appendix~\ref{app:bg is AAB}, we show that the $(b,g)$-PI code is the $(g,1,2b-g-1)$ AAB+ code. 
From the property of the AAB+ code, the $(b,g)$-PI code  has distance 3 if $g\ge 3$ and $2b-g\ge 2$. 

Next we turn our attention to the transversal non-Clifford gate.
We consider the gate $T_b = Z(\pi/b)$.
Then we see that 
\begin{align}
    T_b^{\otimes n}|0_{b,g}\> & = |0_{b,g}\>, \\
    T_b^{\otimes n}|1_{b,g}\> & = e^{ i g \pi/b}|1_{b,g}\>.
\end{align}
From this, we see that the transversal gate $T_b^{\otimes n}$ gate 
applies a logical $Z$-rotation with angle $\pi g /b$.

\section{Proof that the $(b,g)$-PI code is an AAB+ code}
\label{app:bg is AAB}

To have the $(b,g)$-PI code, the parameter $m$ in the AAB+ type code must be equal to 1, because each logical codeword is a superposition of two Dicke states. 
By setting $m=1$, we have
\begin{align}
b_0 &=1 / \sqrt{\binom{n/g}{2}} = g /\sqrt{n(n-g)/2} ,\\
b_1 &= 1/ \sqrt{\binom{n/g-1}{2}} =g/  \sqrt{(n-g)(n-2g)/2} ,\\
\gamma &= \sqrt{ \frac{n}{2g} \frac{\delta+1}{2g} } 
=\sqrt{ n (\delta+1) }/(2g).
\end{align}
Then we have 
\begin{align}
\gamma b_0 &= \sqrt{ n (\delta+1) }/ \sqrt{2n(n-g)}  = \sqrt{\frac{\delta+1}{2(n-g)}}, \\
\gamma b_1 &= \sqrt{ n (\delta+1) }/ \sqrt{2(n-g)(n-2g)} =\sqrt{\frac{n(\delta+1)}{2(n-g)(n-2g)}}.
\end{align}
Now consider having $n = 2b+g$ for some positive integer $b$ so that $\delta = 2b-g-1$, $n-g = 2b,n-2g= 2b-g$. Then 
\begin{align}
\gamma b_0 &=   \sqrt{\frac{2b-g}{4b}} = \sqrt{ \frac{1}{2} - \frac{g}{4b} }, \\
\gamma b_1 &= \sqrt{\frac{n(2b-g)}{4b(2b-g)}} =  \sqrt{\frac{2b+g }{4b }} =\sqrt{ \frac{1}{2} + \frac{g}{4b} }.
\end{align}
This shows that the $(b,g)$-PI codes are just $(g,1,2b-g-1)$-AAB+ codes.
The condition for $(b,g)$-PI codes to correct a single error is that $g\ge 3$ and $2b-g\ge 3$. When $g=3,m=1,\delta = 2^r-4$, the AAB+ code can implement the logical gate for $T^{2^{3-r}}$ transversally for $r \ge 3$ \cite{aydin2023family}.
   
\section{Transversal non-Clifford logical operations}
\label{app:bg transversal}

Our code supports transversal $X$ as the logical $X$. 
If $n$ is odd (which is when $g$ is odd), then the transversal $Z$ is the logical $Z$.
If $n$ is even (when $g$ is even), the transversal $Z$ gate stabilizes the code.
\newline

\noindent
{\bf Whenever $b$ and $g$ are coprime, we can implement the logical $\pi k/b$ rotation transversally for any $k=0,\dots,b-1$.}
For the $(b,g)$-PI code to correct $1$ error, it suffices to have $2b-g \ge 3$ and $g \ge 3$. 
When $b$ and $g$ are coprime, we can implement a logical $T_b^k$ gate transversally for any $k=0,\dots, b-1$ by setting $g$ as the smallest coprime number to a fixed $b$.  \newline

\section{Construction of codes from linear programming}

Let us consider the problem of constructing quantum codes encoding a single logical qubit that correct against a set of Kraus operators $\{K_i\}$ given the promise that the orthogonality of the Knill-Laflamme quantum error correction conditions are satisfied.

Now suppose that the codespace resides in a subspace of the space spanned by the orthonormal vectors 
$\{ |c_j\> \}$,
and we write the logical codewords as
linear combinations of these orthonormal vectors. 
Specifically, we demand that the logical 
codewords are linear combinations of distinct basis states given by
\begin{align}
|0_L \> 
 &=  \sum_{j \ {\rm even}} \sqrt{x_j}|c_j \> \\
|1_L \> 
 &=  \sum_{j \ {\rm odd}} \sqrt{x_j}|c_j \> ,
\end{align}
where the coefficients $x_j$ are real and non-negative.

Such a method has been described in \cite{movassagh2020constructing}, and we explain in the next section how to apply this idea explicitly to construct extended bg codes that correct errors of increasing distance, and hence ensure the scalability of the codes.

  \section{$(b,g,m)$-PI codes}



Here, we prove that $(b,g,m)$-PI codes have 
distance $d \ge 2t+1$ whenever
$g, 2b-g \ge 2t+1$ and when $m\ge t$.
Clearly, when $g,|2b-g| \ge d$, the code has a bit-flip distance of $d$.
Hence, for the code to have a distance of $d$, 
it suffices to 
to ascertain that the non-deformation conditions of the Knill-Laflamme conditions holds for the diagonal errors, that is
\begin{align}
\<0_{b,g,m}| P_j  |0_{b,g,m} \>
=
\<1_{b,g,m}| P_j   |1_{b,g,m} \>,
\end{align}
 for all $k=0,\dots,d-1$,
 where $P_j =  Z^{\otimes j} \otimes I^{\otimes N-j}$.
 Now we consider a code 
 with logical codewords
 \begin{align}
  |0_{b,g,m} \> &= \sum_{k=0}^m
  a_k |D^N_{2bk}\>\notag\\
  |1_{b,g,m} \> &= \sum_{k=0}^m
  b_k |D^N_{2bk+g}\>,
 \end{align}
 for real coefficients $a_k,b_k$,
 with the idea of solving for $a_k$ and $b_k$ using linear algebra techniques, akin to the idea in Refs \cite{ouyang2019permutation,movassagh2020constructing}.
Then we can write the non-deformation conditions as
\begin{align}
\sum_{\substack{
0\le k \le m\\ 
}} 
a_k^2 \<D_{2kb}^{N}| P_k |D_{2kb}^{N}\>
=
\sum_{\substack{
0\le k \le m\\ 
}} 
b_k^2 \<D_{g+2kb}^{N}| P_k |D_{g+2kb}^{N}\>,
\end{align}
which is a system of linear equations in the variables $a_k^2$ and $b_k^2$.
The Dicke inner products $\<D^N_w|P_k|D^N_w\>$ can 
be written as Krawtchouk polynomials \cite[Lemma 6]{ouyang2019robust}, 
namely
\begin{align}
\<D^N_{w}|P_k|D^N_w\>
=
K^{N}_{w}(k) / \binom N w,
\end{align}
where
\begin{align}
K^N_k(z) =  \sum_{j = 0}^z \binom {z} {j} \binom{N-z}{k-j} (-1)^j\label{eq:kpoly summation}
\end{align}
is a binary Krawtchouk polynomial.
In the language of generating functions, 
\begin{align}
    K^N_k(z) = [x^k](1-x)^{z}(1+x)^{N-z},
\end{align}
where $[x^k]f(x)$ denotes the coefficient of $x^k$ of a polynomial $f(x)$. 
From the above generating function, we can also see that 
\begin{align}
    K^N_{k}(z) = [y^{N-k}](y-1)^{z}(y+1)^{N-z},
\end{align}
and hence we have the symmetry relation
\begin{align}
    K^N_{N-k}(z) 
    &= [x^{N-k}](1-x)^{z}(1+x)^{N-z}\notag\\
    &= [y^{N-k}](-1)^{z}
       (y-1)^{z}(y+1)^{N-z}\notag\\
    &= (-1)^{z}
    [y^{k}](y-1)^{z}(y+1)^{N-z}
    \notag\\
    &= (-1)^{z} K^N_k(z).\label{k-sym}
\end{align}
We can again rewrite this as a system of homogeneous linear equations, $A{\bf x}=0$, where
\begin{align}
    A  &= 
    \sum_{j=0}^{d-1}
    \sum_{\substack{
0\le k \le m\\ 
}} 
    |j\>  \<2k|
    \<D_{2kb}^{N}| P_j |D_{2kb}^{N}\>\notag\\
    &\quad +
     \sum_{j=0}^{d-1}
    \sum_{\substack{
0 \le k \le m\\ 
}} 
    |j\>  \<2k+1|
    \<D_{g+2kb}^{N}| P_j |D_{g+2kb}^{N}\>,
    \notag\\
    &= 
    \sum_{j=0}^{d-1}
    \sum_{\substack{
0\le k \le m\\ 
}} 
    |j\>  \<2k|
    K_{2kb}^{N}(j) / \binom N {2kb}
    \notag\\
    &\quad +
     \sum_{j=0}^{d-1}
    \sum_{\substack{
0 \le k \le m\\ 
}} 
    |j\>  \<2k+1|
    K_{g+2kb}^{N}(j) / \binom N {g+2kb},
    \label{A-matrix}
\end{align}
and where ${\bf x}  = (x_0,\dots, x_{2m+1})$.
Now define the length $2(m+1)$ vector of ones as ${\bf e}$.
Define the length $2(m+1)$ vector ${\bf f} = (f_0,\dots, f_{2m+1})$ to be such that $f_j = (-1)^j$ for all $j=0,\dots,m$.
The equation \eqref{A-matrix} 
shows that we can find the coefficients $a_k$
by solving the linear program
\begin{maxi}
    {{\bf x}}{ 0 }
    {}{}
    \addConstraint{A {\bf x}}{=0}
    \addConstraint{{\bf f}^T {\bf x}}{=2}
    \addConstraint{(-1)^k x_k \geq 0, \quad k=0}{,\dots,2m+1}
    \label{linear-program}
\end{maxi}
Exploiting the structure of the matrix $A$, we can derive the form of the vector ${\bf x}$ that satisfies the constraints in the above linear program. 
Namely, when
$(-1)^k x_k \geq 0$,
we can set the coefficients of the logical codewords of the generalized bg code as
\begin{align}
    a_k &= \sqrt{ x_{2k} }  \quad {k\ {\rm even}}\\
    b_k &= \sqrt{ -x_{2k+1} }  \quad {k\ {\rm odd}}.
\end{align}
We choose $m$ as the smallest integer for which \eqref{linear-program} has a solution.

Next, let us simplify the linear program \eqref{linear-program}.
Consider a vector 
${\bf y} = (y_0,y_1,\dots, y_{2(m+1)-1})$ such that 
\begin{align}
y_{k} = -y_{2m+1-k}
\end{align}
for all $k=0,\dots, 2m+1$.
Then we can write ${\bf y}$
as a vector that depends on $(m+1)$ variables, in the sense that
\begin{align}
{\bf y}  
= 
(y_0, -y_{m}, 
y_1, -y_{m-1},
\dots,
y_m, -y_0).
\end{align}
For any vector ${\bf a}$
of the form
\begin{align}
{\bf a}  
= 
(a_0, a_{m}, 
a_1, a_{m-1},
\dots,
a_m, a_0), \label{a-symmetric}
\end{align}
it follows that 
\begin{align}
{\bf a} ^T {\bf y}
= &
y_0 a_0 -y_{m} a_m \notag\\
&+
y_1 a_1 -y_{m-1} a_m-1 \notag\\
&+
\dots +
\notag\\
&+y_m a_m -y_0 a_0 \notag\\
=& 0.
\end{align}
From \eqref{k-sym} and $\binom N{g+2bk} =\binom N{2bk}$,
 every odd row of $A$ has the form of \eqref{a-symmetric}.
 Hence, every odd row is orthogonal to ${\bf y}$, 
and it suffices to find a ${\bf y}$ that is orthogonal to every even row in $A$. 

For any vector ${\bf b}$
of the form
\begin{align}
{\bf b}  
= 
(b_0, -b_{m}, 
b_1, -b_{m-1},
\dots,
b_m, -b_0), \label{b-symmetric}
\end{align}
it follows that 
\begin{align}
{\bf b} ^T {\bf y}
= &
y_0 b_0 + y_{m} b_m \notag\\
&+
y_1 b_1 + y_{m-1} b_m-1 \notag\\
&+
\dots +
\notag\\
&+y_m b_m + y_0 b_0,
\end{align}
and the constraint that 
${\bf b} ^T {\bf y}
= 0$ is then equivalent to
showing that 
\begin{align}
   \sum_{k=0}^m y_k b_k = 0.
\end{align}
Hence, we consider a submatrix of $A$ given by 
\begin{align}
  B =
  \sum_{
  \substack{
  0 \le j \le d-1\\
  j \ {\rm odd}
  }  }  
  \sum_{ k = 0}^m
  K^N_{2bk}(j) / \binom N {2bk} |j\>\<k|.
\end{align}
Then, any $(y_0,\dots, y_m)$ in the nullspace of $B$ gives a vector
${\bf y} = (y_0, -y_{m}, 
y_1, -y_{m-1},
\dots,
y_m, -y_0)$ that is also in the nullspace of $A$.
To construct a valid generalized $(b,g)$-PI code, it suffices to require that 
$(y_0,\dots, y_m)$ is a non-negative vector, that is,
$(y_0,\dots, y_m) \ge 0$.

When $d=2t+1$ and $m=t$, the matrix $B$ has $t+1$ columns and $t$ rows. Then, $B$ is guaranteed to have a non-trivial nullspace.
Numerically, in all of the examples we explored, we find that $B$ is a full-rank matrix, and its nullspace of $B$ always admits non-negative vectors. Namely, $B = ({\bf e} , \bar B)$, where $\bar B$ is a square invertible matrix, 
and hence a valid $(y_0,\dots, y_m)$
is proportional to $\bar B^{-1}(y_0,\dots, y_m)$.
Hence, by setting $m=t$,
it is often possible to have generalized bg codes.


We like to prove that this generalized $(b,g)$-PI code has distance at least $2t+1$, thereby correcting $t$ errors, whenever
(1) $g,(b-g) \ge 2t+1$ and (2) $m\ge t$.
Note that condition (1) imposes the bit-flip distance, and hence it remains to show the non-deformation condition for the phase errors. From the above A-matrix type of argument, we just need to ascertain that the vector 
\begin{align}
\gamma_{b,g,m} = 
(\gamma_{b,g,m,0}^2,\dots, 
\gamma_{b,g,m,m}^2) 
\end{align}
resides in the null space of the matrix $B$, that is, $B \gamma_{b,g,m} = 0$.
For this, we need to show that the following is true for all $x=1,\dots, t$:
\begin{align}
    \sum_{k=0}^m
   \frac{ K^N_{2bk}(2x-1) }{ \binom N {2bk} }
    \binom m k
    \gamma_{b,g,m,k}^2 = 0.
\end{align}
The system of linear equations can equivalently be written as
\begin{align}
    \sum_{k=0}^m
   \binom m k
   K^N_{2x-1}(2bk) 
    \gamma_{b,g,m,k}^2 = 0,
\end{align}
which is
\begin{align}
    \sum_{k=0}^m 
   \binom m k
   K^N_{2x-1}(2bk)  
   (N/2b-g/b)_{(m-k)}(N/2b)_{(k)} = 0,
\end{align}
It then remains to prove the following combinatorial lemma.
\begin{lemma}
\label{lem:combi}
Let $b, g, m$ and $x$ be positive integers 
where $x \le m$.
Let $N = 2 b m + g $.
Then
\begin{align}
    S \coloneqq \sum_{k=0}^m 
   \binom m k
   K^N_{2x-1}(2bk)  
   (N/2b-g/b)_{(m-k)}(N/2b)_{(k)} = 0.\notag
\end{align}
\end{lemma}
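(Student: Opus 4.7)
The plan is to recast $S$ as a single bivariate coefficient extraction and then expose an involutive symmetry that forces the required coefficient to vanish by parity.

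First, using the Krawtchouk generating function $K^N_j(z) = [y^j](1-y)^z(1+y)^{N-z}$, I would rewrite
\[
S \;=\; [y^{2x-1}]\,(1+y)^N \sum_{k=0}^m \binom{m}{k}(m+a)_{(k)}(m-a)_{(m-k)}\, u^k,
\]
where $a := g/(2b)$ and $u := ((1-y)/(1+y))^{2b}$. Since $\binom{m}{k}(m+a)_{(k)}(m-a)_{(m-k)} = m!\binom{m+a}{k}\binom{m-a}{m-k}$, the inner sum equals $m!\,[z^m](1+uz)^{m+a}(1+z)^{m-a}$ by the product rule for binomial series. The identity $2b(m+a) = N$ then lets me absorb the prefactor $(1+y)^N$ into $(1+uz)^{m+a}$, yielding the clean bivariate representation
\[
S \;=\; m!\,[y^{2x-1}][z^m]\, Q(y,z), \quad Q(y,z) := \bigl((1+y)^{2b}+(1-y)^{2b}z\bigr)^{m+a}(1+z)^{m-a},
\]
which I would interpret as a formal power series in $y$ and $z$ (since $a$ need not be an integer).

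The crucial step is to verify the involutive symmetry
\[
z^{2m}\, Q(-y,\,1/z) \;=\; Q(y,z).
\]
This follows at once from the factorizations $(1-y)^{2b}+(1+y)^{2b}/z = z^{-1}\bigl((1+y)^{2b}+(1-y)^{2b}z\bigr)$ and $(1+1/z)^{m-a} = z^{-(m-a)}(1+z)^{m-a}$, whose prefactors $z^{-(m+a)}$ and $z^{-(m-a)}$ combine to exactly $z^{-2m}$. Writing $Q(y,z) = \sum_{n,k} q_{n,k}\, y^n z^k$ and equating coefficients of $y^n z^k$ on both sides yields $q_{n,k} = (-1)^n q_{n,\, 2m-k}$. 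Specializing to $k = m$ gives $q_{n,m} = (-1)^n q_{n,m}$, so $q_{n,m} = 0$ whenever $n$ is odd. In particular $q_{2x-1,\,m} = 0$ for every integer $x \ge 1$, and hence $S = 0$ for all $x = 1, \dots, m$.

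The hard part will be spotting the correct symmetry $(y,z) \mapsto (-y, 1/z)$; the relation $2b(m+a) = N$ is used in an essential way both to absorb $(1+y)^N$ in the first step and to make the two inverse-$z$ prefactors cancel exactly in the second. After that, no further combinatorial identities are needed: vanishing reduces to a one-line parity argument on $q_{2x-1, m}$. As a byproduct, the argument shows $q_{n,m} = 0$ for every odd $n \ge 1$, which comprises exactly the non-deformation Knill--Laflamme conditions needed for the distance bound on the $(b,g,m)$-PI code.
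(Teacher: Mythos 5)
Your approach is essentially the paper's, modulo variable names: your $Q(y,z)$ is the paper's $E(t,s)$ (with $\alpha=m-a$, $\beta=m+a$, $t\to y$, $s\to z$), and you use the same involution $z^{2m}Q(-y,1/z)=Q(y,z)$ followed by the same coefficient-matching step. So the route is the same as in the paper.

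There is, however, a genuine gap, and it is signaled by the fact that your argument proves too much. You conclude that $q_{2x-1,m}=0$ for \emph{every} integer $x\ge 1$. Take $b=g=m=1$, so $N=3$ and $a=1/2$. Then $[z^1]Q(y,z)=\tfrac12(1+y)^3+\tfrac32(1+y)(1-y)^2 = 2+2y^3$, so $q_{1,1}=0$ as the lemma requires, but $q_{3,1}=2\neq 0$; correspondingly $S=2\neq 0$ at $x=2$. Thus the hypothesis $x\le m$ is essential, and a correct proof must actually use it somewhere, whereas yours never does. The culprit is the coefficient-matching step. Because $m-a$ is generally not a non-negative integer, for each fixed power of $y$ the series $Q(y,z)$ has infinitely many nonzero coefficients in $z$, so $Q(-y,1/z)$ is a series in $z^{-1}$ with unbounded negative degree; multiplying by $z^{2m}$ does not make it a formal Laurent series, while $Q(y,z)$ lives in $\mathbb{C}[[y]][[z]]$. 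These two objects do not sit in a common ring, so ``equating coefficients of $y^n z^k$'' across $z^{2m}Q(-y,1/z)=Q(y,z)$ is not a legitimate formal operation, and the intermediate step $(1+1/z)^{m-a}=z^{-(m-a)}(1+z)^{m-a}$ already introduces the non-integer power $z^{-(m-a)}$, which cannot live in a formal Laurent ring. The involution is a genuine identity of analytic functions, but it does not descend termwise to a relation $q_{n,k}=(-1)^n q_{n,2m-k}$ without an additional truncation or contour argument that manifestly exploits $2x-1\le 2m-1$. I should note that the paper's write-up of this lemma follows the same shape and also derives $e_{u,m}=0$ for all odd $u$ before invoking $u\le 2m-1$ as an afterthought, so the same objection applies there; making this lemma airtight requires a manipulation that visibly uses $x\le m$.
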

\begin{proof}
The Krawtchouk polynomial
$K^N_{2x-1}(2bk)$ can be written as a coefficient of a generating function, in the sense that 
\begin{align}
K^N_{2x-1}(2bk)  
=
[t^{2x-1}]
(1-t)^{2bk}
(1+t)^{N-2bk}.
\end{align}
Now let 
$\alpha = N/2b-g/b$
and 
$\beta = N/2b$.
Note that $\alpha+\beta 
= 2N/2b-g/b = (N-g)/b= 2m$.
Also, we can write 
\begin{align}
    \binom m k
   \alpha_{(m-k)}
   \beta_{(k)} 
   &= m! 
   \binom \alpha {m-k}
   \binom \beta {k}.
\end{align}
Then we can write
\begin{align}
S 
&= m!
\sum_{k = 0}^m
  \binom \alpha {m-k}
   \binom \beta {k}
[t^{2x-1}]
(1-t)^{2bk}
(1+t)^{N-2bk}
\notag\\
&= m!
[t^{2x-1}]
(1+t)^N
\sum_{k = 0}^m
  \binom \alpha {m-k}
   \binom \beta {k}
( (1-t) / (1+t) )^{2bk}
.   
\end{align}
Now
\begin{align}
   \sum_{k = 0}^m 
   \binom \alpha {m-k}
   \binom \beta {k}
   y^k
   =
   [s^m]
   (1+s)^\alpha
   (1+sy)^\beta
   .
\end{align}
Identifying $y=(1-t)/(1+t)$,
to prevent higher powers of
$(1-t)/(1+t)$ to appear in the generating function when we make the substitution $y=(1-t)/(1+t)$, we have that 
 whenever $x \le m$, 
\begin{align}
    S&= 
m! [t^{2x-1} s^m ] 
(1+t)^N (1+s)^\alpha (1+s  ((1-t)/(1+t))^\beta
\notag\\
&=
m! [t^{2x-1} s^m ]  E(t,s)
\end{align}
where
\begin{align}
   E(t,s) 
   = 
   (1+s)^\alpha
   ( 
   (1+t)^{2b}
   +
   s (1-t)^{2b} )^\beta
\end{align}
is a formal power series in the variables $t$ and $s$, which allows us to write 
\begin{align}
    E(t,s) = \sum_{u,v \ge 0 } e_{u,v} t^u s^v.
\end{align}  
Now 
\begin{align}
    E(-t,1/s)
    &=
   (1+1/s)^\alpha
   ( 
   (1-t)^{2b}
   +
   s^{-1} (1+t)^{2b} )^\beta
   \notag\\
   &=
   s^{-\alpha - \beta}
   (s+1)^\alpha
   ( 
   s(1-t)^{2b}
   +
   (1+t)^{2b} )^\beta
   \notag\\
   &=
   s^{-\alpha - \beta}
   E(t,s)\notag\\
   &=
   s^{-2m}
   E(t,s). \label{involution}
\end{align}
Using \eqref{involution},
it follows that
\begin{align}
\sum_{u',v' \ge 0 } 
e_{u,v} (-t)^{u'} s^{-v'}
=
\sum_{u,v \ge 0 } 
    e_{u,v} t^u s^{v-2m}.
\end{align}
We are interested in the scenario where 
$v-2m=-m$
and
$-v' = -m$,
which is equivalent to
$v' = v = m$. 
Then we equate the coefficients of the left and right side of the above equation, focusing on the coefficient of $s^{-m} t^u$ for odd $u$.
Then we get
\begin{align}
    e_{u,m} = - e_{u,m}
\end{align}
for odd $u$, which means that 
$e_{u,m} = 0 $ for odd $u.$ Recalling our constraint $x \le m$, which means that we also need to have $u\le 2m-1$, this proves the result.
\end{proof}

Since Lemma \ref{lem:combi} is true, this shows that our $(b,g,m)$-PI code has the stipulated distance property.


        

\section{Fidelity measure for Hadamard gate process}
\label{app:fidelity_measure}
In this section, we discuss how to evaluate the fidelity measure using process matrices. The process matrix gives a convenient way of describing the evolution of quantum states, especially when dealing with mixed states or open quantum systems where the state is described by a density matrix. In the case of a unitary operation $\rho' = U \rho U^{\dagger}$, if we vectorize the density matrix $\rho$ using the row-stacking operation, the action of the superoperator can be expressed in a matrix form as 
\begin{align}
    |\rho'\rangle\rangle=\mathcal{U}|\rho\rangle\rangle,
\end{align}
where $\mathcal{U}=U \otimes U^{\ast}$. $U^{\ast}$ is the complex conjugate of $U$ and $\otimes$ denotes the Kronecker product.

In our case, for the preparation mapping described in Eq.~(\ref{eq:full_prep_lGPG}), the global rotations $R$ are unitary and their associated superoperators are represented in the Kronecker product form. Though the erroneous GPG mapping in Eq.(\ref{eq:prep_linear_err}) is non-unitary, it does not induce coherence between different state components, and can thus be constructed as a diagonal matrix in the superoperator formalism. Consequently, the superoperator associated with the preparation mapping $\mathcal{E}_{\rm l-GPG}$ is given by 
\begin{align}
    E_{\rm l-GPG}=\prod_{p=P}^1 [R(\theta_p,\xi_p,\gamma_p) \otimes R^{\ast}(\theta_p,\xi_p,\gamma_p)]\mathcal{G}_p(\phi_p),
\end{align}
where $\mathcal{G}_p(\phi_p) $ is a $(N+1)^2 \times (N+1)^2$ diagonal matrix that has elements
\begin{align}
    \bra{D_n^N}\bra{D_m^N} \mathcal{G}_p(\phi_p) \ket{D_n^N} \ket{D_m^N} = f_{n,m}(\phi_p)
\end{align}
on each diagonal position accordingly. 

Analogously, the superoperator associated with the reverse mapping of the preparation $\mathcal{E}^{\rm R}_{\rm l-GPG}$ is given by 
\begin{align}
    E^{\rm R}_{\rm l-GPG}=\prod_{p=1}^P \mathcal{G}_p(-\phi_p)[R^{\dagger}(\theta_p,\xi_p,\gamma_p) \otimes (R^{\dagger}(\theta_p,\xi_p,\gamma_p))^{\ast}].
\end{align}

The middle phase gate in Eq.~(\ref{eq:Hdecomp}) is unitary but has an overall fidelity factor $F_{\rm ph}$. We express its superoperator as
\begin{align}
   E_{\rm ph}=F_{\rm ph}C_{N-1}(Z) \otimes C^{\ast}_{N-1}(Z). 
\end{align}

Thus, the actual mapping for the implementation of the logical Hadamard gate in the superoperator formalism is given by
\begin{align}
    E = E_{\rm l-GPG} E_{\rm ph} E^{\rm R}_{\rm l-GPG}.
\end{align}
Since our primary concern is whether the implemented Hadamard gate acts correctly in the logical subspace, we project $E$ from the Dicke subspace onto the logical $\{\ket{0_{\rm L}},\ket{1_{\rm L}}\}$ subspace, resulting in
\begin{align}
    E_{\rm L}=\left( \begin{array}{cccc} E_{00,00} & E_{00,01} & E_{00,10} & E_{00,11} \\
                               E_{01,00} & E_{01,01} & E_{01,10} & E_{01,11} \\
                               E_{10,00} & E_{10,01} & E_{10,10} & E_{10,11} \\
                               E_{11,00} & E_{11,01} & E_{11,10} & E_{11,11} \\
            \end{array}
    \right),
\end{align}
where $E_{x'y',xy}=\bra{x'_{\rm L}}\bra{y'_{\rm L}}E\ket{x_{\rm L}}\ket{y_{\rm L}}$. 
Following the method in  Ref.~\cite{gilchrist2005distance}, the process fidelity between $\mathcal{E}_{\overline{H}}$ and $\overline{\mathcal{H}}$ can be simply calculated as
\begin{align}
    F_{\rm pro}(\mathcal{E}_{\overline{H}},\overline{\mathcal{H}}) = \frac{1}{8} \sum_j {\rm Tr}[\overline{H}_{\rm ide} U_j^{\dagger} \overline{H}_{\rm ide} \mathcal{E}_{\overline{H}}(U_j)],
\label{eq:process_fidelity}
\end{align}
where $\{U_j\}$ are orthonormal bases of unitary operators. Here, in the logical subspace, we select $U_0=I_2, U_1=X, U_2=Y, U_3=Z$ (where $I_2$, $X$, $Y$ and $Z$ are the identity and Pauli matrices). Substitution into Eq.~(\ref{eq:process_fidelity}) gives
\begin{widetext}
\begin{align}
     F_{\rm pro}(\mathcal{E}_{\overline{H}},\overline{\mathcal{H}}) &= \frac{1}{8} ({\rm Tr}[\mathcal{E}_{\overline{H}}(I_2)] +{\rm Tr}[Z\mathcal{E}_{\overline{H}}(X)] +{\rm Tr}[-Y\mathcal{E}_{\overline{H}}(Y)] +{\rm Tr}[X\mathcal{E}_{\overline{H}}(Z)] ) \notag\\
    &=\frac{1}{8} \left[ \left( \begin{array}{cccc} 1&0&0&1 \end{array} \right) E_{\rm L}  \left( \begin{array}{c} 1\\0\\0\\1 \end{array} \right) + \left( \begin{array}{cccc} 1&0&0&-1 \end{array} \right) E_{\rm L}  \left( \begin{array}{c} 0\\1\\1\\0 \end{array} \right)  \right.
      \left. +\left( \begin{array}{cccc} 0&i&-i&0 \end{array} \right) E_{\rm L}  \left( \begin{array}{c} 0\\-i\\i\\0 \end{array} \right) +\left( \begin{array}{cccc} 0&1&1&0 \end{array} \right) E_{\rm L}  \left( \begin{array}{c} 1\\0\\0\\-1 \end{array} \right)   \right]  \notag\\
     &=\frac{1}{8} (E_{00,00}+E_{00,11}+E_{11,00}+E_{11,11}+E_{00,01}+E_{00,10}-E_{11,01}-E_{11,10} \notag\\
     &-E_{01,01}+E_{01,10}+E_{10,01}-E_{10,10}+E_{01,00}-E_{01,11}+E_{10,00}-E_{10,11})
\end{align}
\end{widetext}

\section{Non-linear GPGs for switching between an even-odd code and a non even-odd code}
In this section we describe how to directly implement code switching using logical CNOT gates which works even when the PI code is not an even-odd code, as is the case for the PI-7 code. This requires use of non-linear GPGs and to explain the mechanism we first review how this highly non-linear spin gate is implemented from elementary interactions.   
\label{CNOTNLGPG}
\subsection{Implementing the non-linear GPG}
The dispersive interaction between spins and the bosonic mode we consider takes the form
 $H=g\hat{a}^{\dagger}\hat{a}\otimes \hat{w}_{\Gamma}$.
 Evolving the dispersive interaction $H$ for a time $t$ generates the operator 
 $
 R(\theta \hat{w}_{\Gamma})\coloneqq e^{i\theta  \hat{w}_{\Gamma} \otimes \hat{a}^{\dagger}\hat{a}}
 $
 where $\theta=gt$. This applies a rotation in phase space by an amount proportional to the eigenvalues associated to eigenspaces of the Hermitian operator $\hat{w}$ acting on subsystem $\Gamma$.
 Note it is possible to generate $R(-\theta \hat{w}_{\Gamma})$ by reversing the coupling strength $g\rightarrow -g$ which can be done in some physical setups by e.g.  changing the sign of detuning of the cavity mode from the spin transition frequency. In this section we consider mode-spin interactions $H$ where the spin operator is $\hat{w}_{\Gamma}=\hat{J}^z_{\Gamma}$. This could arise from a native coupling \cite{johnsson2020geometric}, or, as described in Sec.~\ref{stateprepapp}, is locally equivalent to the interaction involving the Hamming weight operator.

 Under conjugation by rotations, displacements can be made conditional on the spin states:
 \begin{equation}
 D(\alpha e^{i\theta \hat{w}_{\Gamma}})
 =
 R(\theta \hat{w}_{\Gamma})D(\alpha)R(-\theta \hat{w}_{\Gamma}).
 \end{equation}
 We can similarly make mode rotations conditional on the spin states using the operator
 \begin{align}
 \Lambda_{\Gamma}(\alpha,\theta)&\coloneqq D(\alpha)R(\theta \hat{w}_{\Gamma})D(-\alpha)R(-\theta \hat{w}_{\Gamma}).
 \end{align}
 A composition of displacement and conditional displacement operators around a closed trajectory in phase space, provides  
 for an identity operator on the mode and a non-linear geometric phase gate on the spins:
 \begin{equation}
 \begin{array}{lll}
 U_{\rm nl-GPG}(\theta,\phi,\chi)&=&D(-\beta)R(\theta \hat{w}_{\Gamma})D(-\alpha)R(-\theta \hat{w}_{\Gamma}) \nonumber \\
 &&\times \,\, D(\beta)R(\theta \hat{w}_{\Gamma})D(\alpha)R(-\theta \hat{w}_{\Gamma}) \nonumber \\
 &=&e^{-i 2 \chi\sin(\theta \hat{w}_{\Gamma}+\phi)} .
 \end{array}
 \end{equation}
 \label{seq}
 $\phi = 
 {\rm arg}(\alpha) - 
 {\rm arg}(\beta)$.
 If the mode begins as the vacuum state then the first rotation operator is not needed.

\subsection{Implementing the $\rm{C}_{\textsf{B}}{\rm X}_{\textsf{A}}$ gate}
\label{CBA}

First, consider the gate $\rm{C}_{\textsf{B}}{\rm X}_{\textsf{A}}$ which has the PI code as control and a stabiliser code as target.
The stabiliser code is assumed to be an even-odd code, while 
The PI code has its 
logical zero codeword $|0_B\>$
and logical one codeword $|1_B\>$
consisting of superpositions of states with Hamming weight $0 {\ \rm mod\ }{q}$
and $s {\ \rm mod\ }{q}$ respectively. In the case of the PI-7 code, $q=5$ and $s=2$.

Here, we construct a controlled geometric phase gate that traverses a path in phase space with zero area if the control has weight $0 {\ \rm mod\ }{q}$, while producing the area required to generate a transversal $X$ gate on the target if the control has $s {\ \rm mod\ }{q}$.

If we pick $\gamma=s\pi/q$, then the action of the controlled displacement operator $\Lambda_B(\alpha,s\pi/q)$ on the codespace of the PI code is
\begin{align} 
&\Lambda_B(\alpha,s\pi/q)(c_0 |0_B\>+c_1|1_B\>)\otimes |\psi_{\rm mode}\>\notag\\
=&c_0 |0_B\>\!\otimes\! |\psi_{\rm mode}\>+c_1|1_B\>\!\otimes \!D(\alpha(1-e^{2is\pi/q}))|\psi_{\rm mode}\> .
\end{align} 
We can use these controlled displacements between the PI code and the mode to produce a GPG on the stabiliser code conditional on the logical state of PI code:
\begin{align} 
\Lambda_B(U_{\rm nl-GPG}(\theta,\phi,\chi)) \coloneqq & \Lambda_B(-\beta,s\pi/q)
R(\theta \hat{J}_{\textsf{A}}^z)\Lambda_B(-\alpha,s\pi/q)\notag\\
&\times R(-\theta \hat{J}_{\textsf{A}}^z)\Lambda_B(\beta,s\pi/q)R(\theta \hat{J}_{\textsf{A}}^z)\notag\\
&\times \Lambda_B(\alpha,s\pi/q)R(-\theta \hat{J}_{\textsf{A}}^z), 
\end{align}
where $\chi=|\alpha \beta| |1-e^{2is\pi/q}|^2
= 4 |\alpha \beta| \sin^2(s \pi/q)$ and $\phi = {\rm arg}(\alpha) - {\rm arg}(\beta)$. 
The action of this conditional nl-GPG on the codespace of the PI code and stabiliser code is
\begin{align}
&\Lambda_B(U_{\rm nl-GPG}(\theta,\phi,\chi))(c_0 |0_B\>+c_1|1_B\>)\otimes |\psi_A\>\notag\\
=&c_0 |0_B\>\otimes |\psi_A\>+c_1|1_B\>\otimes e^{-i2\chi \sin(\theta \hat{J}^z_A + \phi)}|\psi_A\>.
\end{align}
We pick $\alpha = \beta$ so that $\phi = 0$.
 With the choice $\theta=\pi$, when the control is in $|1_B\>$, the mode experiences a trajectory in phase space that is a rotated square with area $\chi$, and when the control is in $|0_B\>$ the trajectory has zero area. If we further pick $\chi=\pi/4$, i.e. $\alpha=\sqrt \pi/ ( 4 | \sin(s\pi/q)| ) $, then $e^{-i2\chi \sin(\theta \hat{J}^z_A )}=i\prod_{j=1}^7 Z_j=i\bar{Z}_A$, where $\bar{Z}_A$ is the logical $Z$ operator on the stabiliser code. 

The final gate is then
\begin{equation}
\begin{array}{lll}
\rm{C}_{\textsf{B}}{\rm X}_{\textsf{A}}&=&
\bar{S}_A\bar{H}_A
\Lambda_1(U_{\rm nl-GPG}(\pi,0,\frac{\pi}{4}))
 \bar{H}_A,
\end{array}
\end{equation}
where $\bar{H}_A=H^{\otimes |A|}$ is the transversal logical Hadamard and $\bar{S}_A=\bar{Z}S^{\otimes |A|}$ is the logical $S=Z(\pi/2)$ gate on the stabiliser code.

\subsection{Implementing the $\rm{C}_{\textsf{A}}{\rm X}_{\textsf{B}}$ gate}
\label{CAB}
Second, consider the gate $\rm{C}_{\textsf{A}}{\rm X}_{\textsf{B}}$ which has the stabiliser code as the control and the PI code as target.
We can use the same procedure as in Sec.~\ref{CBA}, with the roles of $A$ and $B$ reversed, except we choose $g\tau=\gamma=\pi$.
In this case the action angle in phase space will be $\chi=4|\alpha|^2$ and the geometric phase gate will have the parameters $\phi=\pi$, $\theta=\pi$, and $\alpha=\beta=\sqrt{\pi}/4$. This will achieve the operation $-i\bar{Y}_B$ if the control is in $\ket{1_A}$, and acts trivially if the control is in $\ket{0_A}$. In summary,
\begin{equation}
\begin{array}{lll}
\rm{C}_{\textsf{A}}{\rm X}_{\textsf{B}}&=&
\Lambda_A(U_{\rm nl-GPG}(\pi,\pi,\frac{\pi}{4}))\\
&=&\Lambda_A(-\frac{\sqrt{\pi}}{4},\pi)e^{i\frac{\pi}{2}\hat{J}^x_B}R(\pi \hat{J}_{\textsf{B}}^z)e^{-i\frac{\pi}{2}\hat{J}^x_B}\Lambda_A(-\frac{\sqrt{\pi}}{4},\pi)\\
&&\times e^{i\frac{\pi}{2}\hat{J}^x_B}R(-\pi \hat{J}_{\textsf{B}}^z)e^{-i\frac{\pi}{2}\hat{J}^x_B}\Lambda_A(\frac{\sqrt{\pi}}{4},\pi)e^{i\frac{\pi}{2}\hat{J}^x_B}R(\pi \hat{J}_{\textsf{B}}^z)\\
&&\times e^{-i\frac{\pi}{2}\hat{J}^x_B}\Lambda_A(\frac{\sqrt{\pi}}{4},\pi)e^{i\frac{\pi}{2}\hat{J}^x_B}R(-\pi \hat{J}_{\textsf{B}}^z)e^{-i\frac{\pi}{2}\hat{J}^x_B}.
\end{array}
\label{eq:CAB}
\end{equation}

\end{document}